\documentclass[letterpaper, 10 pt, conference]{ieeeconf}  

\IEEEoverridecommandlockouts                              
\overrideIEEEmargins

\usepackage{tikz}
\usetikzlibrary{arrows,positioning,automata, fit,shapes}

\usepackage{multicol}
\usepackage{amsmath,amssymb,amsfonts}
\usepackage[font=small,labelfont=bf]{caption}
\usepackage{epsfig,graphics}
\usepackage{xcolor}
\usepackage[lined,ruled,linesnumbered]{algorithm2e}

\usepackage{pgf}
\usepackage{tikz}
\usetikzlibrary{arrows,automata}

\graphicspath{{./fig/}}

\usepackage{theorem}
\newtheorem{lemma}{Lemma}

\newtheorem{definition}{Definition}

\newtheorem{problem}{Problem}
\newtheorem{example}{Example}

\newcommand{\T}{\mathcal{T}} 
\newcommand{\TS}{\mathcal{T} = (S, \langle S_p,S_a \rangle, s_\init, U_p, U_a, T)} 
\newcommand{\A}{\mathcal{A}} 


\newcommand{\Plays}{\mathit{Plays}}

\newcommand{\init}{\mathit{init}}

\newcommand{\curr}{\mathit{curr}}

\newcommand{\Set}{\mathsf{S}} 


\newcommand{\unsafe}{\mathit{Unsafe}}
\newcommand{\safe}{\mathit{Safe}}
\newcommand{\losing}{\mathit{Losing}}

\newcommand{\cand}{{A_\mathit{cand}}}

\newcommand{\new}[1]{{#1}}

\setlength{\marginparwidth}{0.3in}

\title{\LARGE \bf
Synthesizing least-limiting guidelines for safety of \\ semi-autonomous systems
}


\author{Jana Tumova and Dimos V. Dimarogonas
\thanks{The authors are with the ACCESS Linnaeus Center, School of Electrical
Engineering, KTH Royal Institute of Technology, SE-100 44, Stockholm,
Sweden and with the KTH Centre for Autonomous Systems. This work was supported by the H2020 ERC Starting Grant BUCOPHSYS.
        {\tt\small \{tumova,dimos\}@kth.se}}%
}

\begin{document}

\maketitle
\thispagestyle{empty}
\pagestyle{empty}

\begin{abstract}
We consider the problem of synthesizing safe-by-design control strategies for semi-autonomous systems. Our aim is to address situations when safety cannot be guaranteed solely by the autonomous, controllable part of the system and a certain level of collaboration is needed from the uncontrollable part, such as the human operator. In this paper, we propose a systematic solution to generating least-limiting guidelines, i.e. the guidelines that restrict the human operator as little as possible in the worst-case long-term system executions. The algorithm leverages ideas from 2-player turn-based games.
\end{abstract}

\section{Introduction}

Recent technological developments have enhanced the application areas of autonomous and semi-autonomous cyber-physical systems to a variety of everyday scenarios from industrial automation to transportation and to housekeeping services. These examples have a common factor; they involve operation in an uncertain environment in the presence of highly unpredictable and uncontrollable agents, such as humans. In robot-aided manufacturing, there is a natural combination of autonomy and human contribution. In semi-autonomous driving, the vehicle is partially controlled automatically and partially by a human driver. Even in fully autonomous driving, passengers and pedestrians interact with the vehicle and actively influence the overall system safety and performance. The need for obtaining guarantees on behaviors of these systems is then even more crucial as the stakes are high. 
Formal verification and formal methods-based synthesis techniques were designed to provide such guarantees and recently, they have gained a considerable amount of popularity in applications to correct-by-design robot control. For instance, in~\cite{hadas09TL,nok-hscc2010} temporal logic control of robots in uncertain, reactive environments was addressed. In~\cite{marius-nondet} control synthesis for nondeterministic systems from temporal logic specifications was developed.
Loosely speaking, these works achieve the provable guarantees by accounting for the \emph{worst-case} scenarios in the control synthesis procedure. The uncertainty is therein treated as an \emph{adversary}, which however, often prevents the synthesis procedure to find a correct-by-design autonomous controller.

In this paper, we take a fresh perspective on correct-by-design control synthesis. We specifically focus on situations when the desired controller does not exist.
In contrast to the above mentioned approach, we view the uncertain, uncontrollable elements in the system as \emph{collaborative} in the sense that they have as much interest in keeping the overall system behavior safe, effective, and efficient as the autonomous controller does. At the same time, we still view them as to a large extent \emph{uncontrollable} in the sense that they still have their own intentions and we cannot force them to follow literal step-by-step instructions. In contrast, we aim to \emph{advise} them on what not to do if completely necessary, while keeping their options as rich as possible. 

\new{For example, consider a collaborative human-robot manufacturing task with the goal of assembling {products} $ABC$ through connecting pieces of types $A$ and $C$ to a piece of type $B$. The human operator can put together $A$ with $B$ or with $BC$, whereas the autonomous robot can put together $B$ or $AB$ with $C$. Our goal is to guarantee system safety, meaning that the human and the robot do not work with the same piece of type $B$ at the same time. While we can design a controller for the robot that does not reach for a piece being held by a human, we cannot guarantee that the human will not reach for a piece being held by the robot. To that end, we aim to synthesize \emph{guidelines} for the human, i.e. advise that reaching for a piece that the robot holds will lead to the safety violation. Under the assumption that the human follows this advise, the safety is guaranteed. Yet, this advise is still much less restrictive for the human operator than if the human-robot system was considered controllable as a whole. Namely, in such a case, a correct-by-design controller could dictate the human to always touch only solo $B$ pieces while the robot would be supposed to work only with $AB$ pieces pre-produced by the human. Clearly, the former mentioned guidelines allow for much more freedom of the human's decisions as the human may choose to work with an instance of $B$ piece or $BC$ piece.}
A similar situation occurs in an autonomous driving scenario with a pedestrian crossing the street. If the pedestrian jumps right in front of the car, the collision is unavoidable. A possible guideline for the human enabling the system safety would be not to ever cross the street. This is however a very limiting constraint. Instead, advising the human not to cross the street if the car is close seems quite reasonable.

This paper introduces a \emph{systematic way to synthesize least-limiting guidelines} for the uncontrollable elements in (semi-)autonomous systems, such as humans in human-robot systems, that allow the autonomous part of the system to maintain safety. Similarly as in some related work on correct-by-design control synthesis (e.g.,~\cite{marius-nondet}), we model the overall system state space as a two-player game on a graph with a safety winning condition. The autonomous, controllable entity takes the role of the game protagonist, whereas the uncontrollable entity is the adversary. We specifically work with situations, where the protagonist does not have a winning strategy in the game. We formalize the notion of \emph{adviser} as a function that ``forbids'' the application of certain adversary's inputs in certain system states. Furthermore, we classify the advisers based on the level of limitation they impose on the adversary. Finally, we provide an algorithm to find a least-limiting adviser that allows the protagonist to win the game, i.e. to keep the system safe. 
We also discuss the use of the synthesized advisers for on-the-fly guidance of the system execution. In this work, we do not focus on how the interface between the adviser and the uncontrollable element, such as human, should look like. Rather than that, the contribution of this paper can be summarized as the development of a theoretical framework for automated synthesis of reactive, least-limiting guidelines and control strategies that guarantee the system safety. 
\medskip

Related work includes literature on synthesis of environment assumptions that enable a winning game~\cite{environment} and on using counter-strategies for synthesizing assumptions in generalized reactivity (1) (GR(1)) fragment of LTL~\cite{mining,counter}. These works however synthesize the assumptions in the form of logic formulas, whereas we focus on guiding the adversary through explicitly enumerating the inputs that should not be applied. Synthesis of maximally permissive strategies is considered in~\cite{permissive} and also in discrete-event systems literature in~\cite{mp-sofsem}, where however, only controllable inputs are being restricted.
Our approach is different to the above works, since we aim for systematic construction of reactive guidelines in the sense that if the least-limiting adviser is not followed, a suitable substitute adviser is supplied if such exists. We also use a different criterion to measure the level of limitation that is the worst-case long-term average of restrictions as opposed to the cumulative number of restrictions considered in~\cite{environment} or the size of the set of behaviors considered in~\cite{permissive}.
Other related literature studies problems of minimal model repair~\cite{model-repair,mr-mdp}, synthesis of least-violating strategies~\cite{faella,hscc}, or design of reward structures for
decision-making processes in context of human-machine interaction~\cite{mazo}. This work can be also viewed in the context of literature aimed at collaborative human-robot control, e.g.,~\cite{ketan,sandra}. 

\medskip

The paper is structured as follows. In Sec.~\ref{sec:prelims} we introduce necessary notation and preliminaries. In Sec.~\ref{sec:problem}, we state our problem. In Sec.~\ref{sec:solution}, we introduce the synthesis algorithm in details and discuss the use of the synthesized solution for on-the-fly guidance. Sec.~\ref{sec:conc} concludes the paper and discusses future research. Throughout the paper, we provide several illustrative examples demonstrating the developed theory.

%
%
%

\section{Preliminaries}
\label{sec:prelims}


\label{sec:prelims}

Given a set $\Set$, we use $2^\Set$, $|\Set|$, $\Set^*$, $\Set^\omega$ to denote the powerset of $\Set$, the cardinality of $\Set$, and the set of all finite and infinite sequences of elements from $\Set$, respectively. Given a finite sequence $w$ and a finite or an infinite sequence $w'$, we use $w \cdot w'$ to denote their concatenation. \new{Let $w(i)$ and $w_{\leadsto j}$ denote the $i$-th element of word $w$ and the prefix of $w$ that ends in $w(j)$, respectively}. Furthermore, assuming that $\Set$ is a set of finite sequences and $\Set'$ is a set of finite and/or infinite sequences, $\Set \cdot \Set' = \{w \cdot w' \mid w \in \Set \wedge \, w' \in \Set'\}$. $\mathbb{Z}$ denotes the set of integers.

\label{sec:games}
\begin{definition}[Arena]\label{def:arena}
A 2-player turn-based {game arena} is a transition system $\TS$, where
 $S$ is a \new{nonempty, finite} set of states;
$\langle S_p, S_a \rangle$ is a partition of $S$ into the set of protagonist (player $p$) states $S_p$ and the set of adversary (player $a$) states $S_a$, such that $S_p \cap S_a = \emptyset$, $S_p \cup S_a = S$;
$s_\init \in S_p$ is the initial protagonist state;
$U_p$ is the set of inputs of the protagonist;
$U_a$ is the set of inputs of the adversary;
$T = T_p \cup T_a$, is a partial injective transition function, where $T_p: S_p \times U_p  \rightarrow S_a$ and $T_a: S_a \times U_a \rightarrow S_p$. 
\end{definition}

Note that in a protagonist state, only an input of the protagonist can be applied, and analogously, in an adversary state, only an input of the adversary can be applied. We assume that from a protagonist state, the system can only transition to an adversary state and vice versa. This assumption is not restrictive, since it can be easily shown that any game arena with $T_p: S_p \times U_p \rightarrow S$ and $T_a: S_a \times U_a \rightarrow S$ can be transformed to satisfy it. Loosely speaking, each transition from a protagonist state to a protagonist state is split into two transitions, to and from a  new adversary state. Analogous transformation can be applied to the transitions from adversary states to adversary states. 

Let $U_i^{s_i} = \{u_i \in U_i \mid T_i(s_i,u_i) \text{ is defined} \}$ denote the set of inputs of player $i \in \{p,a\}$ that are \emph{enabled} in the state $s_i \in S_i$. Arena $\T$ is \emph{non-blocking} if $|U_i^{s_i}| \geq 1$, for all $i \in \{p,a\}$ and all $s_i \in S_i$ \new{and \emph{blocking} otherwise.}
A \emph{play} in $\T$ is an \emph{infinite} alternating sequence of protagonist and adversary states $\pi = s_{p,1}s_{a,1}s_{p,2}s_{a,2} \ldots$, 
such that $s_{p,1} = s_\init$ and for all $j \geq 1$ there exist $u_{p,j} \in U_p, u_{a,j} \in U_a$, such that $T_p(s_{p,j},u_{p,j}) = s_{a,j}$, and $T_a(s_{a,j},u_{a,j}) = s_{p,j+1}$. 
Note that for each play $\pi$, $\pi(2k) \in S_a$, while $\pi(2k-1) \in S_p$, for all $1 \leq k$. \new{ A \emph{play prefix} $\pi_{\leadsto j}=\pi(1)\ldots \pi(j)$ is a finite prefix of a play $\pi =\pi(1)\pi(2) \ldots$. Let $\Plays^\T$ denote the set of all plays in $\T$.}
\new{If a set of plays $\mathit{Plays}^{\dot \T}$ of a blocking arena $\dot{\T}$ is nonempty, then $\dot \T$ can be transformed into an equivalent non-blocking arena $\T$ via a systematic removal of \emph{blocking states} and their adjacent transitions that are defined inductively as follows: (i) each $s_i \in S_i$, $i \in \{p,a\}$, such that $U_i^{s_i} = \emptyset$  is a blocking state and (ii)~if $T_i(s_i,u_i)$ is a blocking state for each $u_i \in U_i^{s_i}$, then $s_i$, $i \in \{i,p\}$ is a blocking state, too. Then $\Plays^{\dot \T} = \Plays^{\T}$. }


A \emph{deterministic control strategy} (or strategy, for short) of player $i \in \{p,a\}$ is a partial function $\sigma_i^\T: S^*\cdot S_i \rightarrow U_i$ that assigns a player $i$'s enabled input $u_i \in U_i^{s_i}$ to \new{each play prefix} in $\T$ that ends in a player $i$'s state $s_i \in S_i$.
Strategies $\sigma_p^\T,\sigma_a^\T$ \emph{induce a play} $\pi^{\sigma_p^\T,\sigma_a^\T} = s_{p,1} s_{a,1} s_{p,2} s_{a,2} \ldots \in (S_p \cdot S_a)^\omega$, such that $s_{p,1} = s_\init$, and for all $j \geq 1$, $T_p(s_{p,j},\sigma_{p}(s_{p,1} s_{a,1} \ldots s_{p,j})) = s_{a,j}$, and $T_a(s_{a,j},\sigma_{a}(s_{p,1} s_{a,1} \ldots s_{p,j}s_{a,j})) = s_{p,j+1}$. 
A  strategy $\sigma_i^\T$ is called \emph{memoryless} if it satisfies the property that $\sigma_i^\T(s_1 \ldots s_n) = \sigma_i^\T(s_1' \ldots s_m')$ whenever $s_n = s_m'$. Hence, with a slight abuse of notation, memoryless control strategies are viewed as functions $\varsigma_i^\T: S_i \rightarrow U_i$. 
The set of all strategies of player $i$ in  $\T$ is denoted by $\Sigma_i^\T$. The set of all plays induced by all strategies in $\Sigma_p^\T,\Sigma_a^\T$, i.e. the set of all plays in $\T$ is $\Plays^{\Sigma_p^\T,\Sigma_a^\T} = \{\pi^{\sigma_p^\T,\sigma_a^\T} \mid \sigma_p^\T \in \Sigma_p^\T, \sigma_a^\T \in \Sigma_a^\T\}.$ Analogously, we use $\Plays^{\sigma_p^\T,\Sigma_a^\T} = \{\pi^{\sigma_p^\T,\sigma_a^\T} \mid \sigma_a^\T \in \Sigma_a^\T\}$ to denote the set of plays induced by a given strategy $\sigma_p^\T$ and by all strategies $\sigma_a^\T \in \Sigma_a^\T$.

{A \emph{game} $G = (\T,W)$ consists of a game arena $\T$ and a \emph{winning condition} $W \subseteq Plays^{\Sigma_p^\T,\Sigma_a^\T}$ that is in general a subset of plays in $\T$. 
A \emph{safety winning condition} is $W_\safe = \{\pi \in \Plays^{\Sigma_p^\T, \Sigma_a^\T} \mid \text{ for all } j \geq 1. \, \pi(j) \in \safe\},$ 
where $S = \langle \safe, \unsafe \rangle$ is a partition of the set of states into the safe and unsafe state subsets.}
A protagonist's strategy $\sigma_p^\T$ is winning if $\Plays^{\sigma_p^\T,\Sigma_a^\T} \subseteq W$. Let $\Omega_p^\T \subseteq \Sigma_p^\T$ denote the set of all protagonist's winning strategies.



\medskip

Let $\TS$ be an arena and $w: S  \times S \rightarrow \mathbb{Z}$ be a weight function that assigns a weight to each $(s,s')$, such that there exists $u \in U_p \cup U_a$, where $(s,u,s') \in T$. Then $(\T,w)$ can be viewed as an arena of a \emph{mean-payoff game}. 
The value secured by protagonist's strategy $\sigma_p^\T$ is 
\vspace{-0.2cm}
$$\nu(\sigma_p^\T) = \inf_{\sigma_a^\T \in \Sigma_a^\T} \liminf_{n \rightarrow \infty} \frac{1}{n} \sum_{j=1}^{n} w(\pi^{\sigma_p^\T,\sigma_a^\T}(j),\pi^{\sigma_p^\T,\sigma_a^\T}(j+1) ).$$
\vspace{-0.3cm}

An \emph{optimal protagonist's strategy} $\sigma_p^{\T*}$ secures the optimal value $\nu(\sigma_p^{\T*}) = \sup_{\sigma_p^\T \in \Sigma_p^\T} \nu(\sigma_p^\T).$
Several algorithms exist to find the optimal protagonist's strategy, see, e.g., \cite{brim}.
For more details on games on graphs in general, we refer the interested reader e.g., to~\cite{games-book}.

\section{Problem formulation}
\label{sec:problem}



The \emph{system} that we consider consists of two entities: the first one is the autonomous part of the system that we aim to control (e.g., a robotic arm), and the second one is the agent that is uncontrollable, and to a large extent unpredictable (e.g., a human operator in a human-robot manufacturing scenario). The overall state of such system is determined by the system states of these entities (e.g., the positions of the robotic and the human arms and objects in their common workspace and the status of the manufacturing). In this paper, we consider systems with a finite number of states $Q$ (obtained, e.g., by partitioning the workspace into cells). The system state can change if one of the entities takes a decision and applies an input (e.g., the robot can move the arm from on cell to another, or the human can pick up an object). For simplicity, we assume that the entities take regular turns in applying their inputs. This assumption is however not too restrictive as we may allow the entities to apply a special pass input $\epsilon$ that does not induce any change to the current system state.


\medskip


To model the system formally, we call the former, controllable entity the protagonist, the latter, uncontrollable entity the adversary, and we capture the impacts of their inputs to the system states through a game arena (see Def.~\ref{def:arena})
\begin{align}
  \TS.
\end{align}
The set of the arena states is $S = Q \times \{p,a\}$ and each arena state $s = (q,i) \in S$ is defined by the system state $q \in Q$ and the entity $i \in \{p,a\}$ whose turn it is to apply its input, i.e. $(q,p) \in S_p$, and $(q,a) \in S_a$, for all $q \in Q$. Behaviors of the system are thus captured through plays in the arena.




The goal of the former, controllable entity is to keep the system safe, i.e. to avoid the subset of unsafe system states, while the latter entity has its own goals, such as to reach a certain system state, etc. Formally, the protagonist is given a partition of states $S = \langle \safe,\unsafe \rangle$ and the corresponding safety winning condition $W_\safe$. The arena $\T$ together with the safety winning condition $W_\safe$ establish a game $(T,W_\safe)$.

\vspace{-0.3cm}

{\begin{example}

Consider the simplified manufacturing scenario outlined in the introduction.  A system state is determined by the current pieces in the workspace and their status; each of them is either on the desk, held by the human, or by the robot: $Q \subseteq 2^{\{A,B,C,AB,BC,ABC\} \times \{\mathit{desk,human,robot}\}}$. The robot acts as the protagonist and the human as the adversary. $s_ \init = \big(\{\mathit{(A,desk), (B,desk),(C,desk)\},a} \big)$ is an example of a system initial state. The inputs of the robot are $U_p = \big\{\{\mathit{grab_p, drop_p}\} \times \{A,B,C,AB,BC,ABC\} \cup \{\mathit{connect_p}\} \times \{(B,C),(AB,C)\}\big\}$ and similarly, $U_a = \big\{\mathit{grab_a, drop_a}\}  \times \{A,B,C,AB,BC,ABC\} \cup \{\mathit{connect_a}\} \times \{(A,B),(A,BC)\}\} $. The transition function reflects the effect of inputs on the system state. For instance, 
\vspace{-0.5cm}

\small{
\begin{align*}
T&\big(\big(
\{(A,desk), (B,desk),(C,desk)\},a\big),(grab_a,A)\big) = \\ =  & \ \big(\{(A,human), (B,desk),(C,desk)\},p\big), \text{ or } \\ 
T&\big(\big(\{\mathit{(A,desk), (B,robot),(C,robot)\},p\big),\big(connect_p,(B,C)\big)\big)} \\ = & \  \big(\{(A,human),(AB,robot)\},a\big)\big).
\end{align*}}

\vspace{-0.5cm}
\normalsize
Note that the transition function does not have to be manually enumerated. Rather than that, it can be generated from conditions, such as
$T \big(\big(
\{(x,y)\}\cup Z,a\big),(grab_a,x)\big) = \big(\{(x,human)\} \cup Z,p\big)$,
applied to all $x \in \{A,B,C,AB,AC,ABC\}, y \in \{desk,robot\}, Z \subseteq (\{A,B,C,AB,AC,ABC\}\setminus \{x\}) \times \{desk,human,robot
\}$.

\end{example}}

\label{sec:goal}

The problem of finding a protagonist's winning control strategy $\sigma_p^\T$ guaranteeing system safety has been studied before and even more complex winning conditions have been considered \cite{games-book}. 
In this work, we focus on a situation when the protagonist \emph{does not} have a winning control strategy. For such cases, we aim to generate a least-limiting subset of adversary's control strategies that would permit the protagonist to win.  
Loosely speaking, this subset can be viewed as the minimal guidelines for the adversary's collaboration. 

{Note that this problem differs from the supervisory control of discrete event systems as we do not limit only the application of controllable, but also the uncontrollable inputs. However, it also differs from the synthesis of controllers for fully controllable systems as we aim to limit the adversary's application of uncontrollable inputs as little as possible.}
We formalize the guidelines for the adversary's collaboration through the notion of adviser and adviser restricted arena.

\vspace{-0.2cm}

\begin{definition}[Adviser]
An adviser is a mapping $\alpha: S_a \rightarrow 2^{U_a}$, where $\alpha(s_a) \subseteq U_a^{s_a}$ represents the subset of adversary's inputs that are forbidden in state $s_a$.

Given an arena $\T = (S, \langle S_p,S_a \rangle, s_\init, U_p, U_a, T_p \cup T_a)$, and an adviser $\alpha$, the adviser restricted arena is $\dot{\T}^{\alpha} = (S, \langle S_p,S_a \rangle, s_\init, U_p, U_a, \dot{T}_p^{\alpha} \cup \dot{T}_a^{\alpha}),$ where $\dot{T}_p^{\alpha} = T_p$ and $\dot{T}_a^{\alpha} = T_a \setminus \{(s_a,u_a,s_p) \mid u_a \in \alpha(s_a)\}$. The set of all plays in $\dot{\T}^{\alpha}$ is denoted by $\Plays^{\dot\alpha}$.
\label{def:restricted}
\end{definition}

\new{If $\alpha(s_a) = U_a^{s_a}$ for some $s_a \in S_a$, the adviser restricted arena becomes blocking, and hence, not every sequence $s_{p,1}s_{a,1}s_{p,2}s_{a,2} \ldots s_{a,k}$, satisfying $s_{p,1} = s_\init$,  and for all $1 \leq j \leq k$, $1 \leq \ell < k$, $\dot{T}_p^\alpha(s_{p,j},\sigma_{p}(s_{p,1} s_{a,1} \ldots s_{p,j})) = s_{a,j}$, and $\dot{T}_a^\alpha(s_{a,\ell},\sigma_{a}(s_{p,1} s_{a,1} \ldots s_{p,\ell}s_{a,\ell})) = s_{p,\ell+1}$, can be extended to a play. However, if $\Plays^{\dot\alpha}$ is nonempty,
we can transform $\dot \T^\alpha$ into a \emph{non-blocking adviser restricted arena} 
\begin{align}\T^\alpha = (S^
\alpha, \langle S_p^\alpha,S_a^\alpha \rangle, s_\init, U_p, U_a, T_p^\alpha \cup T_a^\alpha)\end{align} that has the exact same set of plays $\Plays^\alpha = \Plays^{\dot\alpha}$ as $\dot \T^\alpha$ as outlined in Sec.~\ref{sec:games}. 
Let us denote the sets of all protagonist's and adversary's strategies in $\T^\alpha$ by $\Sigma_p^\alpha$ and $\Sigma_a^\alpha$, respectively. $\Plays^{\sigma_p^\alpha,\Sigma_a^\alpha}$ refers to the set of plays induced by $\sigma_p^\alpha \in \Sigma_p^\alpha$ and $\Sigma_a^\alpha$ in $\T^\alpha$. 
If however $\dot{P}\mathit{lays}^{\alpha}$ is empty, a non-blocking adviser restricted arena $\T^\alpha$ does not exist.

Given the winning condition $W_\safe$, we define a good adviser $\alpha$ as one that permits the protagonist to achieve safety in the non-blocking adviser restricted arena $\T^\alpha$.
}





\vspace{-0.2cm}

\begin{definition}[Good adviser]
An adviser $\alpha$ is \emph{good} for $(\T,W_\safe)$ if there exists a non-blocking adviser restricted arena $\T^\alpha$ and a protagonist's strategy $\sigma_p^\alpha \in \Sigma_p^\alpha$, such that $\Plays^{\sigma_p^\alpha,\Sigma_a^\alpha} \subseteq W_\safe$. Given a good adviser $\alpha$, the set of protagonist's winning strategies is denoted by $\Omega_p^\alpha \subseteq \Sigma_p^\alpha$.

\end{definition}

\vspace{-0.2cm}
Since there might be more good advisers, we need to distinguish which of them limit the adversary less and which of them more. To that end, we associate each adviser with a cost, called adviser level of limitation. 

\vspace{-0.2cm}

\begin{definition}[Adviser level of limitation]
Given  an arena $\T$ and a good adviser $\alpha$, we define the \emph{adviser level of limitation}
\begin{align}
 \lambda(\alpha)& = \inf_{\sigma_p^\alpha \in \Omega_p^\alpha} \gamma(\sigma_p^\alpha),\text{ where} \label{eq:alpha}\\
\gamma(\sigma_p^\alpha) & = \sup_{\sigma_a^\alpha \in \Sigma_a^\alpha} \limsup_{n \rightarrow \infty} \frac{1}{n} \sum_{j=1}^{n} \big|\alpha(\pi^{\sigma_p^\alpha,\sigma_a^\alpha}(2j))\big|.
  \label{eq:gamma}
\end{align}
\end{definition}

In other words, $\lambda(\alpha)$ is the  \emph{worst-case long-term average} of the number of forbidden inputs along the plays induced by the \emph{best-case} protagonist's strategy $\sigma_p^\alpha$. The choice of the worst-case long-term average is motivated by the fact that although the adversary can be advised, it cannot be controlled. On the other hand, the consideration of the best-case $\sigma_p^\alpha$ is due to the protagonist being fully controllable.
We provide some intuitive explanations on the introduced terminology through the following illustrative example.

\vspace{-0.2cm}
\begin{example}[Safety game and adviser]
\new{
An example of a game arena with a safety winning condition $W_\safe$
is given in Fig.~\ref{fig:example}. (A). The squares illustrate the protagonist's states and the circles illustrate the adversary's ones. Transitions are depicted as arrows between them and they are labeled with the respective inputs that trigger them. The safe states in $\safe$ are shown in green and the unsafe ones in $\unsafe$ are in blue.
Fig.~\ref{fig:example}.(B)-(D) show three advisers $\alpha_B,\alpha_C$ and $\alpha_D$, respectively, via marking the forbidden transitions in red. In Fig.~\ref{fig:example}.(B), $\alpha_B(s_2)$ = $\{u_{a_3}\}$, $\alpha_B(s_4) = \{u_{a_4},u_{a_5}\}$, and $\alpha_B(s_6) = \{u_{a_6},u_{a_7}\}$. In Fig.~\ref{fig:example}.(C), $\alpha_C(s_2) = \{u_{a_3}\}$ and $\alpha_C(s_4) = \alpha_C(s_6) = \emptyset$. Finally, in  Fig.~\ref{fig:example}.(D), $\alpha_D(s_2) = \{u_{a_2},u_{a_3}\}$ and
$\alpha_D(s_4) = \alpha_D(s_6) = \emptyset$.

\begin{figure}[!t]
\vspace{0.2cm}
(A)

\small
\begin{center}
\usetikzlibrary{arrows,positioning,automata,shadows,fit,shapes}
\begin{tikzpicture}[->,>=stealth',shorten >=1pt,auto, node distance=1.8cm,semithick,initial text=]

  \tikzstyle{every state}=[draw]
  \tikzstyle{a}=[circle,thick,draw=green!99,fill=green!30,minimum size=10mm]
  \tikzstyle{aunsafe}=[a,draw=blue!75,fill=blue!20]
  \tikzstyle{p}=[rectangle,thick,draw=green!99,
  			  fill=green!30,minimum size=6mm]
  \tikzstyle{punsafe}=[p,draw=blue!75,fill=blue!20]
  			  
	  \node[initial, initial where=above, p, minimum size=4ex] (q1) {$s_1$}; 
	  \node[a, minimum size=4.5ex] (q2) [below left = of  q1] {$s_2$}; 
      \node[p, minimum size=4ex] (q21) [right of = q2] {$s_3$}; 
	  \node[a, minimum size=4.5ex] (q3) [below right = of q1] {$s_4$}; 
	  \node[punsafe, minimum size=4ex] (q4) [below left = of q3] {$s_5$};
      \node[aunsafe, minimum size=4.5ex] (q41) [right of = q4] {$s_6$};  
	  \node[punsafe, minimum size=4ex] (q5) [below right = of  q3] {$s_7$}; 
	
\begin{scope}[every node/.style={scale=1}]
  \path (q1) edge node [pos=0.1] {$u_{p_1}$} (q2);
    \path (q1) edge node [pos=0.2,right] {$u_{p_2}$} (q3);
  \path (q2) edge [bend left] node [pos=0.2,left] {$u_{a_1}$} (q1);
  \path (q2) edge [bend right] node [pos=0.2,below] {$u_{a_2}$} (q21);
  \path (q2) edge [bend right] node [pos=0.2,left] {$u_{a_3}$} (q4);
  \path (q21) edge node [pos=0.3,above ] {$u_{p_3}$} (q2);
  \path (q21) edge node {$u_{p_4}$} (q3);
  \path (q3) edge [bend right] node  [pos=0.2] {$u_{a_4}$} (q4);
  \path (q3) edge [bend left] node  [pos=0.15, right] {$u_{a_5}$} (q5);
  \path (q4) edge node [pos=0.3] {$u_{p_5}$} (q41);
  \path (q5) edge node [pos=0.5,above] {$u_{p_6}$} (q41);
  \path (q5) edge node [pos=0.15,left]{$u_{p_7}$} (q3);
  \path (q41) edge [bend left] node  [pos=0.2,below] {$u_{a_6}$} (q4);
  \path (q41) edge [bend right] node  [pos=0.2, below] {$u_{a_7}$} (q5);
 \end{scope}
\end{tikzpicture} \\
\end{center}
\normalsize{(B) \hspace{0.25\linewidth} (C) \hspace{0.25\linewidth} (D)}
\vspace{-0.6cm}
\small
\begin{center}
\scalebox{0.55}{
\usetikzlibrary{arrows,positioning,automata,shadows,fit,shapes}
\begin{tikzpicture}[->,>=stealth',shorten >=1pt,auto, node distance=1.2cm,semithick,initial text=]

  \tikzstyle{every state}=[draw]
  \tikzstyle{a}=[circle,thick,draw=green!99,fill=green!30,minimum size=10mm]
  \tikzstyle{aunsafe}=[a,draw=blue!75,fill=blue!20]
  \tikzstyle{p}=[rectangle,thick,draw=green!99,
  			  fill=green!30,minimum size=6mm]
  \tikzstyle{punsafe}=[p,draw=blue!75,fill=blue!20]

	  \node[initial, initial where=above, p, minimum size=4ex] (q1) {}; 
	  \node[a, minimum size=4.5ex] (q2) [below left = of  q1] {}; 
      \node[p, minimum size=4ex] (q21) [right of = q2] {}; 
	  \node[a, minimum size=4ex] (q3) [below right = of q1] {}; 
	  \node[punsafe, minimum size=4ex] (q4) [below left = of q3] {};
      \node[aunsafe, minimum size=4.5ex] (q41) [right of = q4] {};  
	  \node[punsafe, minimum size=4ex] (q5) [below right = of  q3] {}; 
	
\begin{scope}[every node/.style={scale=1}]
  \path (q1) edge node [pos=0.1] {} (q2);
    \path (q1) edge node [pos=0.2,right] {} (q3);
  \path (q2) edge [bend left] node [pos=0.2,left] {} (q1);
  \path (q2) edge [bend right] node [pos=0.2,below] {} (q21);
  \path (q2) edge [bend right,red!75,thick] node [pos=0.2,left,red!75] {} (q4);
  \path (q21) edge node [pos=0.3,above ] {} (q2);
  \path (q21) edge node {} (q3);
  \path (q3) edge [bend right,red!75,thick] node  [pos=0.2,red!75] {} (q4);
  \path (q3) edge [bend left,red!75,thick] node  [pos=0.15, right,red!75] {} (q5);
  \path (q4) edge node [pos=0.3] {} (q41);
  \path (q5) edge node [pos=0.5,above] {} (q41);
  \path (q5) edge node [pos=0.15,left]{} (q3);
  \path (q41) edge [bend left,red!75,thick] node  [pos=0.2,below] {} (q4);
  \path (q41) edge [bend right,red!75,thick] node  [pos=0.2, below] {} (q5);
 \end{scope}
\end{tikzpicture}
\begin{tikzpicture}[->,>=stealth',shorten >=1pt,auto, node distance=1.2cm,semithick,initial text=]

  \tikzstyle{every state}=[draw]
  \tikzstyle{a}=[circle,thick,draw=green!99,fill=green!30,minimum size=10mm]
  \tikzstyle{aunsafe}=[a,draw=blue!75,fill=blue!20]
  \tikzstyle{p}=[rectangle,thick,draw=green!99,
  			  fill=green!30,minimum size=6mm]
  \tikzstyle{punsafe}=[p,draw=blue!75,fill=blue!20]

	  \node[initial, initial where=above, p, minimum size=4ex] (q1) {}; 
	  \node[a, minimum size=4.5ex] (q2) [below left = of  q1] {}; 
      \node[p, minimum size=4ex] (q21) [right of = q2] {}; 
	  \node[a, minimum size=4ex] (q3) [below right = of q1] {}; 
	  \node[punsafe, minimum size=4ex] (q4) [below left = of q3] {};
      \node[aunsafe, minimum size=4.5ex] (q41) [right of = q4] {};  
	  \node[punsafe, minimum size=4ex] (q5) [below right = of  q3] {}; 
	
\begin{scope}[every node/.style={scale=1}]
  \path (q1) edge node [pos=0.1] {} (q2);
    \path (q1) edge node [pos=0.2,right] {} (q3);
  \path (q2) edge [bend left] node [pos=0.2,left] {} (q1);
  \path (q2) edge [bend right] node [pos=0.2,below] {} (q21);
  \path (q2) edge [thick, bend right,red!75] node [pos=0.2,left] {} (q4);
  \path (q21) edge node [pos=0.3,above ] {} (q2);
  \path (q21) edge node {} (q3);
  \path (q3) edge [bend right] node  [pos=0.2] {} (q4);
  \path (q3) edge [bend left] node  [pos=0.15, right] {} (q5);
  \path (q4) edge node [pos=0.3] {} (q41);
  \path (q5) edge node [pos=0.5,above] {} (q41);
  \path (q5) edge node [pos=0.15,left]{} (q3);
  \path (q41) edge [bend left] node  [pos=0.2,below] {} (q4);
  \path (q41) edge [bend right] node  [pos=0.2, below] {} (q5);
 \end{scope}
\end{tikzpicture}
\begin{tikzpicture}[->,>=stealth',shorten >=1pt,auto, node distance=1.2cm,semithick,initial text=]

  \tikzstyle{every state}=[draw]
\tikzstyle{a}=[circle,thick,draw=green!99,fill=green!30,minimum size=10mm]
  \tikzstyle{aunsafe}=[a,draw=blue!75,fill=blue!20]
  \tikzstyle{p}=[rectangle,thick,draw=green!99,
  			  fill=green!30,minimum size=6mm]
  \tikzstyle{punsafe}=[p,draw=blue!75,fill=blue!20]

	  \node[initial, initial where=above, p, minimum size=4ex] (q1) {}; 
	  \node[a, minimum size=4.5ex] (q2) [below left = of  q1] {}; 
      \node[p, minimum size=4ex] (q21) [right of = q2] {}; 
	  \node[a, minimum size=4ex] (q3) [below right = of q1] {}; 
	  \node[punsafe, minimum size=4ex] (q4) [below left = of q3] {};
      \node[aunsafe, minimum size=4.5ex] (q41) [right of = q4] {};  
	  \node[punsafe, minimum size=4ex] (q5) [below right = of  q3] {}; 
	
\begin{scope}[every node/.style={scale=1}]
  \path (q1) edge node [pos=0.1] {} (q2);
    \path (q1) edge node [pos=0.2,right] {} (q3);
  \path (q2) edge [bend left] node [pos=0.2,left] {} (q1);
  \path (q2) edge [thick,bend right,red!75] node [pos=0.2,below] {} (q21);
  \path (q2) edge [thick,bend right,red!75] node [pos=0.2,left] {} (q4);
  \path (q21) edge node [pos=0.3,above ] {} (q2);
  \path (q21) edge node {} (q3);
  \path (q3) edge [bend right] node  [pos=0.2] {} (q4);
  \path (q3) edge [bend left] node  [pos=0.15, right] {} (q5);
  \path (q4) edge node [pos=0.3] {} (q41);
  \path (q5) edge node [pos=0.5,above] {} (q41);
  \path (q5) edge node [pos=0.15,left]{} (q3);
  \path (q41) edge [bend left] node  [pos=0.2,below] {} (q4);
  \path (q41) edge [bend right] node  [pos=0.2, below] {} (q5);
 \end{scope}
\end{tikzpicture}
}
\end{center}
\vspace{-0.2cm}

\caption{(A) An example of a game arena with a safety winning condition. The protagonist's and adversary's states are illustrated as squares and circles, respectively. The safe set $\safe$ is in green, the unsafe set $\unsafe$ in blue. Transitions are depicted as arrows between them and they are labeled with the respective inputs that trigger them. (B) -- (D) show three different advisers $\alpha_B,\alpha_C$ and $\alpha_D$, respectively, via marking the forbidden transitions in red.}
\label{fig:example}
\vspace{-0.7cm}
\end{figure}

For $\alpha_B$, the non-blocking adviser restricted arena contains states $S^{\alpha_B} = \{s_1,s_2,s_3\}$. The set of protagonist's strategies in $\T^{\alpha_B}$ is $\Sigma_p^{\alpha_B}=\{\sigma_p^{\alpha_B}\}$, such that $\sigma_p^{\alpha_B}(\pi(1) \ldots \pi(2j) s_1) = u_{p_1}$ and $\sigma_p^{\alpha_B}(\pi(1) \ldots \pi(2j) s_3) = u_{p_3}$, for all play prefixes $\pi(1) \ldots \pi(2j)$, $j\geq 0$ of all plays $\pi \in \Plays^{\Sigma_p^{\alpha_B},\Sigma_a^{\alpha_B}}$. Since $\sigma_p^{\alpha_B}$ is winning, $\alpha_B$ is good. 
It is easy to see that the set of protagonist's winning strategies  and the set of all adversary's strategies in $\T^{\alpha_B}$ induce a set of plays $\Plays^{\Omega_{p}^{\alpha_B},\Sigma_{a}^{\alpha_B}} =  \{ s_1s_2\pi(3)s_2\pi(5)s_2\pi(7)s_2\ldots \mid \pi(2j+1) \in \{s_1,s_3\}, \text{ for all } j\geq 1\}.$
The strategy $\sigma_p^{\alpha_B} \in \Omega_p^{\alpha_B}$ is therefore associated with the value
$ \gamma(\sigma_p^{\alpha_B})  =  \sup_{\sigma_a^{\alpha_B} \in \Sigma_{a}^{\alpha_B}}\limsup_{n \rightarrow \infty} \frac{1}{n} \sum_{j=1}^{n} \big|\alpha_B(\pi^{\sigma_p^{\alpha_B},\sigma_a^{\alpha_B}}(2j))\big| =  \limsup_{n \rightarrow \infty} \frac{1}{n} \sum_{j=1}^{n} \big|\alpha_B(s_2)\big| = 1,$ and the level of limitation of $\alpha_B$ is $\lambda(\alpha_B) = 1$.
Although it might seem that adviser $\alpha_B$ is more limiting than $\alpha_C$, it is not the case. The non-blocking adviser restricted arena $\T^{\alpha_C}$ in this case contains all states from $\T$, $S^{\alpha_C} = S$. However, the set of winning protagonist's strategies $\Omega_p^{\alpha_C}$ in $\T^{\alpha_C}$ is analogous as in case (B). Namely, if $\sigma_p^{\alpha_C}(\pi(1) \ldots \pi(2j) s_1) = u_{p_2}$ or $\sigma_p^{\alpha_C}(\pi(1)\pi(2) \ldots \pi(2j) s_3) = u_{p_4}$, the resulting play would not be winning for the protagonist as all adversary's choices in $s_4$ lead to an unsafe state. Hence, 
$\Plays^{\Omega_{p}^{\alpha_C},\Sigma_{a}^{\alpha_C}} =  \Plays^{\Omega_{p}^{\alpha_B},\Sigma_{a}^{\alpha_B}}$
and the level of limitation of $\alpha_C$ is $\lambda(\alpha_C) = 1$.
Finally, $\alpha_D$ is more limiting than $\alpha_B$ and $\alpha_C$. Following similar reasoning as above, we can see that  $\Plays^{\Omega_{p}^{\alpha_D},\Sigma_{a}^{\alpha_D}} = \{s_1s_2s_1s_2s_1s_2s_1s_2\ldots\},$ but since $|\alpha_D(s_2)|=2$, we have  
$\lambda(\alpha_D)  = \inf_{\sigma_p^{\alpha_D} \in \Omega_p^{\alpha_D}} \gamma(\sigma_p^{\alpha_D})  =   \limsup_{n \rightarrow \infty} \frac{1}{n} \sum_{j=1}^{n} \big|\alpha_D(s_2)\big| = 2.$}

\end{example}




\new{
\begin{problem}
Consider $\TS$, and a safety winning condition $W_\safe$ given via a partition $S = \langle \safe, \unsafe \rangle$. Synthesize an adviser $\alpha^\star$, and a {protagonist's winning strategy} $\sigma_p^{\alpha^\star\star}$, such that:
\begin{itemize}
\item[(i)] $\alpha^\star$ is good and
$\sigma_p^{\alpha^\star\star} \in \Omega_p^{\alpha\star}$, 
\item[(ii)] $\lambda (\alpha^\star) = \inf_{\alpha \in A} \lambda(\alpha)$, where $A$ is the set of all good advisers for $(\T,W_\safe)$, i.e. $\lambda (\alpha^\star) $ is least-limiting and 
\item[(iii)] $\gamma(\sigma_p^{\alpha^\star\star}) = \inf_{\sigma_p^{\alpha^\star} \in \Omega_p^{\alpha^\star}}\gamma(\sigma_p^{\alpha^\star})$, i.e. $\sigma_p^{\alpha^\star\star}$ is optimal.
\end{itemize}
\label{problem:main}
\end{problem}
}

\section{Solution}

\label{sec:solution}

Our solution builds on several steps: first, we generate a so-called \emph{nominal adviser}, which assigns to each adversary state the set of forbidden inputs. We prove that the nominal adviser is by construction good, but does not have to be least-limiting. Second, building on the nominal adviser, we efficiently generate a finite set of candidate advisers. 
Third, the structural properties of the candidate advisers inherited from the properties of the nominal adviser allow us to prove that the problem of finding $\alpha^\star$ and $\sigma_p^{\alpha^\star\star}$ can be transformed to a mean-payoff game. By that, we prove that at least one $\sigma_p^{\alpha^\star\star}$ is memoryless and hence we establish decidability of Problem~\ref{problem:main}.
\new{Finally, we discuss how the set of the candidate advisers and their associated optimal protagonist's winning strategies can be used to guide an adversary who disobeys a subset of advises provided by a least-limiting adviser.}

\subsection{Nominal adviser}
\label{sec:nominal}
The algorithm to find the nominal adviser $\alpha^0$ is summarized in Alg. \ref{alg:nominal}. It systematically finds a set of states $\losing$, from which reaching of the unsafe set $\unsafe$ cannot be avoided under any possible protagonist's and any adversary's choice of inputs. The set $\losing$ is obtained via the computation of the finite converging sequence $\unsafe = \losing^0 \subset \losing^1 \subset \ldots \subset \losing^{n-1} = \losing^n = \losing$, $n \geq 0$, where for all $0 \leq j < n$, $\losing^{j+1}$ is the set of states each of which either already belongs to $\losing^{j}$ or has all outgoing transitions leading to $\losing^{j}$ (line \ref{line:losing}). The nominal adviser $\alpha^0$ is set to forbid all transitions that lead to $\losing$ (line \ref{line:adviser}). {By construction, the algorithm terminates in at most $|S|$ iteration of the while loop (lines~\ref{line:while}--\ref{line:whileend}).}

\begin{algorithm}[!h]
\small
\caption{The nominal adviser $\alpha^0$}
\label{alg:nominal}
 \KwData{$\TS$, and unsafe set $\unsafe \subseteq S$}
 \KwResult{$\alpha^0: S_a \rightarrow 2^{U_a}$}
 \ForAll{$s_a \in S_a$}{
  $\alpha^0(s_a):= \emptyset$
 }
 \ForAll{$s_a \in \unsafe$} 
 {\label{line:unsafe}  
  $\alpha^0(s_a):= U_a^{s_a}$
 } \label{line:unsafeend}
  $\losing^0:= \unsafe$\\
  $j := 0$\\
 \While{$j = 0$ or $\losing^j \neq \losing^{j-1}$}{ \label{line:while}
   \ForAll{$s_p \in \losing^j$}{
   \ForAll{$s_a,u_a$, such that $T(s_a,u_a)=s_p$}{
   $\alpha^0(s_a):= \alpha^0(s_a) \cup \{u_a\}$ \label{line:adviser}
   } 
   }
   $\losing^{j+1}:=\losing^j \cup \{s_i \in S_i \mid \bigcup_{u_i \in U_i^{s_i}} \{T_i(s_i,u_i)\} \subseteq \losing^j, i \in \{a,p\}\}$ \label{line:losing}\\ 
   $j := j+1$
 \label{line:whileend} }
$\losing:= \losing^{j}$ 
\end{algorithm}

The following three lemmas summarize the key features of $\alpha^0$ computed according to Alg. \ref{alg:nominal}. The first two state that, if there exists a good adviser for $(\T,W_\safe)$, then the nominal adviser is good. The third states that, if the nominal adviser forbids the adversary to apply an input $u_a \in \alpha^0(s_a)$ in a state $s_a$, then there does not exist a {less limiting good adviser $\alpha'$}, such that $u_a \not \in \alpha'(s_a)$.

\vspace{-0.2cm}

\begin{lemma} If $s_\init \in \losing$ then there does not exist a good adviser for 
$(\T,W_\safe)$.
\label{lemma:1}
\end{lemma}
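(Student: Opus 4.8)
The plan is to argue by contradiction, exploiting the fact that membership in $\losing$ guarantees that \emph{every} play --- not merely every play consistent with some protagonist strategy --- eventually visits $\unsafe$. Concretely, I would first establish the following invariant of the sequence $\losing^0 \subset \losing^1 \subset \ldots \subset \losing$ built in Alg.~\ref{alg:nominal}: for every $j \geq 0$ and every state $s \in \losing^j$, every play $\pi$ in $\T$ with $\pi(1) = s$ satisfies $\pi(k) \in \unsafe$ for some $k \leq j+1$. This is proved by induction on $j$. The base case $j=0$ is immediate since $\losing^0 = \unsafe$. For the inductive step, a state $s \in \losing^{j+1} \setminus \losing^j$ has, by the rule on line~\ref{line:losing}, the property that \emph{all} of its enabled-input successors lie in $\losing^j$; hence the second state $\pi(2)$ of any play from $s$ lies in $\losing^j$, and applying the induction hypothesis to the play suffix $\pi(2)\pi(3)\ldots$ yields a visit to $\unsafe$ within the bound.

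The second ingredient I would set up is that restricting the adversary can never create new behaviour. For any adviser $\alpha$ that admits a non-blocking restricted arena $\T^\alpha$, its transition relation is obtained from that of $\T$ by deleting adversary transitions (Def.~\ref{def:restricted}) and then deleting blocking states, so the transitions of $\T^\alpha$ form a subset of those of $\T$. Consequently every play of $\T^\alpha$ is also a play of $\T$, i.e. $\Plays^\alpha \subseteq \Plays^{\Sigma_p^\T,\Sigma_a^\T}$, and in particular $\Plays^{\sigma_p^\alpha,\Sigma_a^\alpha} \subseteq \Plays^{\Sigma_p^\T,\Sigma_a^\T}$ for every protagonist strategy $\sigma_p^\alpha \in \Sigma_p^\alpha$.

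Combining the two ingredients finishes the argument: assuming $s_\init \in \losing$, suppose toward a contradiction that some good adviser $\alpha$ exists, witnessed by a winning $\sigma_p^\alpha$ with $\Plays^{\sigma_p^\alpha,\Sigma_a^\alpha} \subseteq W_\safe$. Any play $\pi$ in this set is, by the containment, a play of $\T$ starting at $s_\init \in \losing$, so by the invariant it visits $\unsafe$, contradicting $\pi \in W_\safe$; hence no good adviser exists. The step I expect to require the most care is the invariant of the first paragraph, and in particular the observation that the successor condition on line~\ref{line:losing} quantifies over \emph{all} enabled inputs of the state's owner, so that no choice of either player --- not even a fully cooperative adversary whose moves have been pruned by $\alpha$ --- can steer a play out of $\losing$. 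Once this ``for all plays'' reading of $\losing$ is pinned down, the subset-of-plays property of the restricted arena makes the remainder routine.
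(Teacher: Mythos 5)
Your proof is correct and follows essentially the same route as the paper's: a contradiction argument driven by inductive descent through the $\losing^j$ hierarchy, forcing any play from $s_\init \in \losing$ into $\unsafe$. The only difference is organizational --- you factor out a standalone invariant over all plays of $\T$ plus the explicit observation that $\Plays^{\sigma_p^\alpha,\Sigma_a^\alpha}$ is contained in the plays of $\T$, whereas the paper performs the same descent directly along a winning play inside the restricted arena $\T^\alpha$; your modularization makes explicit the ``restriction adds no behaviour'' step that the paper uses implicitly.
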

\begin{proof}
Suppose that $s_\init \in \losing$ and there exist a good adviser $\alpha$ for $(\T,W_\safe)$. Then there exists a non-blocking adviser restricted arena $\T^\alpha$ and a protagonist's strategy $\sigma_p^\alpha \in \Sigma_p^{\alpha}$, such that $\Plays^{\sigma_p^\alpha,\Sigma_a^{\alpha}} \subseteq W_\safe$ in $\T^{\alpha}$. Consider a  play $\pi = s_{p,1}s_{a,1}s_{p,2}s_{a,2}\ldots \in \Plays^{\sigma_p^\alpha,\Sigma_a^{\alpha}}$ in $\T^\alpha$ and note that $\pi$ does not intersect $\unsafe$.
Suppose that $s_{p,1} = s_\init \in \losing \setminus \unsafe$. Then there exists $j \geq 1$, such that $s_\init \in \losing^j$, but $s_\init \not \in \losing^{j-1}$ and $\bigcup_{u_p \in U_p^{s_\init}} \{T_p^\alpha(s_\init,u_p)\} \subseteq \losing^{j-1}$ (line~\ref{line:losing}). Thus, $s_{a,1} \in \losing^{j-1}$. Furthermore,  if $s_{a,1} \not \in \unsafe$ then $\bigcup_{u_a \in U_a^{s_{a,1}}} \{T_a^\alpha(s_{a,1},u_a)\} \subseteq \losing^{j-2}$ (line~\ref{line:losing}). Via inductive application of analogous arguments, we obtain that there exists $k \geq 1$, such that either $s_{p,k} \in \losing^0 = \unsafe$ or $s_{a,k} \in \losing^0 = \unsafe$. This contradicts the assumption that $\pi$ is winning, i.e. the assumption that $\alpha$ is a good adviser.
\end{proof}

\vspace{-0.2cm}

\begin{lemma} 
If $s_\init \not \in \losing$, then $\alpha^0$ computed by Alg.~\ref{alg:nominal} is a good adviser. 
\label{lemma:2}
\end{lemma}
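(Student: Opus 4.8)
The plan is to show that the set $S \setminus \losing$ of non-losing states is an invariant the protagonist can maintain forever, while $\alpha^0$ simultaneously prevents the adversary from ever escaping it. Since $\unsafe \subseteq \losing^0 = \losing$, any play trapped in $S \setminus \losing$ never meets $\unsafe$ and is therefore winning, which is exactly what goodness of $\alpha^0$ requires. First I would record two structural facts about the fixed point $\losing = \losing^n$ computed on line~\ref{line:losing} and about the construction of $\alpha^0$ on line~\ref{line:adviser} (which, as noted before the lemmas, forbids precisely the transitions entering $\losing$, and forbids all of $U_a^{s_a}$ when $s_a \in \unsafe \subseteq \losing$):
\textbf{(P1)} every protagonist state $s_p \notin \losing$ admits some input $u_p$ with $T_p(s_p,u_p) \notin \losing$; and
\textbf{(P2)} every adversary state $s_a \notin \losing$ admits at least one non-forbidden input, and every non-forbidden input from $s_a$ leads outside $\losing$.
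Both follow from the fixed-point property: if all successors $T_i(s_i,u_i)$ of some $s_i \notin \losing$ lay in $\losing = \losing^n$, then $s_i$ would have been added to $\losing^{n+1} = \losing^n$, a contradiction; and since $\dot T_a^{\alpha^0} = T_a$ with exactly the transitions into $\losing$ removed (Def.~\ref{def:restricted}), the non-forbidden successors of such an $s_a$ are precisely its successors outside $\losing$.

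Next I would fix the memoryless protagonist strategy $\varsigma_p^{\alpha^0}$ that at every $s_p \notin \losing$ selects some successor outside $\losing$, which is well defined by (P1). Arguing by induction along an arbitrary play starting at $s_\init \notin \losing$ and alternating (P1) at protagonist states with (P2) at adversary states, every visited state stays in $S \setminus \losing \subseteq \safe$, and every such state has an allowed outgoing transition, so the play extends indefinitely. This gives two conclusions at once: $\Plays^{\dot\alpha^0}$ is nonempty, so the non-blocking adviser restricted arena $\T^{\alpha^0}$ exists and $s_\init$ is not removed as a blocking state; and $\varsigma_p^{\alpha^0}$ induces only plays contained in $S \setminus \losing \subseteq \safe$, hence in $W_\safe$. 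Reading $\varsigma_p^{\alpha^0}$ as a strategy $\sigma_p^{\alpha^0} \in \Sigma_p^{\alpha^0}$ of $\T^{\alpha^0}$ (it is ever evaluated only at reachable, non-losing protagonist states, where it stays defined), we obtain $\Plays^{\sigma_p^{\alpha^0},\Sigma_a^{\alpha^0}} \subseteq W_\safe$, which is precisely the definition of $\alpha^0$ being good.

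The main obstacle is the bookkeeping around the blocking-to-non-blocking transformation of Sec.~\ref{sec:games}: I must argue that the non-losing region the strategy reaches is contained in the surviving state set $S^{\alpha^0}$ (in particular $s_\init \in S^{\alpha^0}$), and that transferring $\varsigma_p^{\alpha^0}$ from $\dot\T^{\alpha^0}$ to $\T^{\alpha^0}$ preserves both enabledness of the chosen inputs and the safety of the induced plays. A secondary delicate point is part (ii) of (P2) — that $\alpha^0$ never forbids \emph{all} adversary inputs at a state we actually reach — since this is exactly what guarantees the reached adversary states are non-blocking; here the fixed-point characterization of $\losing$ is essential, and it is worth observing as a sanity check that, dually, every losing adversary state \emph{is} fully blocked by $\alpha^0$ and hence correctly discarded by the transformation.
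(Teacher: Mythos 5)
Your proof is correct and follows essentially the same route as the paper's: you establish exactly the two invariance facts the paper derives from lines~\ref{line:losing} and~\ref{line:adviser} (the protagonist can always move within $S \setminus \losing$, and under $\alpha^0$ every permitted adversary move stays in $S \setminus \losing$), then conclude nonemptiness of $\Plays^{\dot\alpha^0}$, existence of the non-blocking arena $\T^{\alpha^0}$, and a winning strategy avoiding $\losing \supseteq \unsafe$. Your only additions — making the strategy explicitly memoryless, the careful bookkeeping for the blocking-to-non-blocking transformation, and the observation that losing adversary states are fully blocked — are sound refinements of details the paper's proof leaves implicit, not a different argument.
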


{
\begin{proof}
Let $s_\init \not \in \losing$. From the construction of $\losing$ and $\alpha^0$, it follows that for all $s_p \in S_p \setminus \losing$ there exists an input $u_p$ and a state $s_a \in S_a \setminus \losing$, such that $T_a(s_p,u_p) = s_a$ (line \ref{line:losing}). Furthermore, for all $s_a \in S_a \setminus \losing$ and all $u_a \in U_a^{s_a} \setminus \alpha^0(s_a)$ it holds that $T(s_a,u_a) \in S_p \setminus \losing$ (line \ref{line:adviser}). 
Hence, there exists a play $\pi \in \Plays^{\dot{\alpha}^0}$ in $\dot \T^{\alpha^0}$, and thus there also exists a non-blocking adviser restricted arena $\T^{\alpha^0}$ and  $\sigma_p^{\alpha^0} \in \Sigma_p^{\alpha^0}$, such that any play $\pi \in \Plays^{\sigma_p^{\alpha^0},\Sigma_a^{\alpha^0}}$ in $\T^{\alpha^0}$ does not intersect $\losing$. Because $\unsafe \subseteq \losing$, it holds that $\Plays^{\sigma_p^{\alpha^0},\Sigma_a^{\alpha^0}} \subseteq W_\safe$ and adviser $\alpha^0$ is good. 
\end{proof}
}

Intuitively, Lemmas \ref{lemma:1} and \ref{lemma:2} state that the restrictions imposed by the nominal adviser $\alpha^0$ were sufficient. As a corollary, it also holds that the non-blocking nominal adviser restricted arena $\T^{\alpha^0}$ does not contain any state in $\losing$ and therefore that all plays in $\T^{\alpha^0}$ are winning. Note however, that the nominal adviser does not have to be least-limiting.
As we illustrate through the following example, imposing additional restrictions on the adversary's choices might, perhaps surprisingly, lead to the avoidance of adversary's states, where a high number of inputs are forbidden.

\vspace{-0.2cm}

\begin{example}
\new{
An example of a safety game is shown in Fig.~\ref{fig:example2}.(A). The result of the nominal adviser computation according to Alg.~\ref{alg:nominal} is illustrated in Fig.~\ref{fig:example2}.(B). Namely, $\losing = \{s_4\}$, and $\alpha^0(s_2)=\{u_{a_2}\}$, $\alpha^0(s_6) = \{u_{a_5}\}$, and $\alpha^0(s_7) = \emptyset$. There is only one protagonist's strategy $\{\sigma_p^{\alpha^0}\} = \Sigma_p^{\alpha^0}$, and it is winning $\sigma_p^{\alpha^0} \in \Omega_p^{\alpha^0}$ since $\Plays^{\sigma_p^{\alpha^0},\Sigma_a^{\alpha^0}} = \{s_1s_2s_3s_6s_3s_6\ldots, s_1s_2s_5s_7s_5s_7\ldots\}.$
The level of limitation of $\alpha^0$ is thus  
$
\lambda(\alpha^0)  =   \sup_{\sigma_a^\alpha \in \Sigma_a^\alpha} \limsup_{n \rightarrow \infty} \frac{1}{n} \sum_{j=1}^{n} \big|\alpha(\pi^{\sigma_p^\alpha,\sigma_a^\alpha}(2j))\big| = \limsup_{n \rightarrow \infty} \frac{1}{n} \Big(\big|\alpha(s_2)\big| + \sum_{j=2}^{n} \big|\alpha(s_3)\big|\Big) = 1. $
Loosely speaking, the worst-case adversary's strategy $\sigma_a$ that respects the nominal adviser $\alpha^0$ takes the play to the left-hand branch of the system.

Fig.~\ref{fig:example2}.(C) shows an alternative adviser $\alpha'$ that guides each play to the right-hand branch of the system. It is good since there is a non-blocking adviser limited arena $\T^{\alpha'}$ and the only protagonist's strategy $\sigma_p^{\alpha'} \in \Sigma_p^{\alpha'}$ on $\T^{\alpha'}$ is winning, since $\Plays^{\sigma_p^{\alpha'},\Sigma_a^{\alpha'}} = \{s_1s_2s_5s_7s_5s_7\ldots\}.$
The level of limitation of $\alpha'$ is 
$
\lambda(\alpha')  =  \limsup_{n \rightarrow \infty} \frac{1}{n} \Big(\big|\alpha(s_1)\big| + \sum_{j=2}^{n} \big|\alpha(s_5)\big|\Big) = \limsup_{n \rightarrow \infty} \frac{1}{n} \Big(\big|\alpha(s_1)\big|\Big) \ll 1.$
Hence, $\alpha'$ is less limiting than the nominal adviser $\alpha^0$.
}
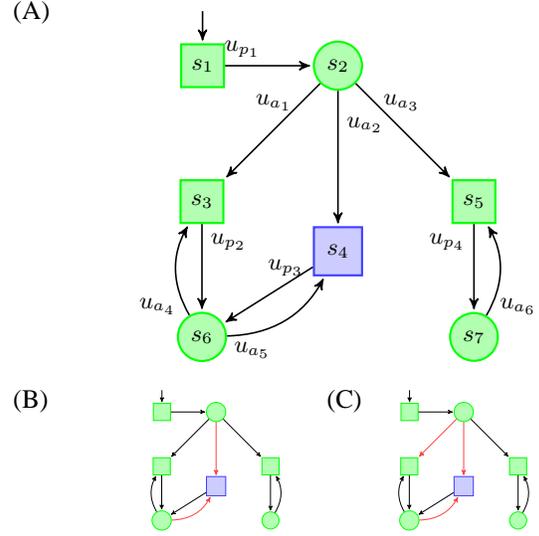
\begin{figure}[!t]
\vspace{0.2cm}
(A)
\vspace{-0.6cm}
\small
\begin{center}
\usetikzlibrary{arrows,positioning,automata,shadows,fit,shapes}
\begin{tikzpicture}[->,>=stealth',shorten >=1pt,auto, node distance=1.8cm,semithick,initial text=]

  \tikzstyle{every state}=[draw]
  \tikzstyle{a}=[circle,thick,draw=green!99,fill=green!30,minimum size=10mm]
  \tikzstyle{aunsafe}=[a,draw=blue!75,fill=blue!20]
  \tikzstyle{p}=[rectangle,thick,draw=green!99,
  			  fill=green!30,minimum size=6mm]
  \tikzstyle{punsafe}=[p,draw=blue!75,fill=blue!20]
  			  
	  \node[initial, initial where=above, p, minimum size=4ex] (q1) {$s_1$}; 
	  \node[a, minimum size=4.5ex] (q2) [right of = q1] {$s_2$}; 
      \node[p, minimum size=4ex] (q21) [below left = of q2] {$s_3$}; 
	  \node[punsafe, minimum size=4.5ex] (q3) [below = of q2] {$s_4$}; 
	  \node[p, minimum size=4ex] (q4) [below right = of q2] {$s_5$};
      \node[a, minimum size=4.5ex] (q41) [below of = q21] {$s_6$};  
	  \node[a, minimum size=4ex] (q5) [below of = q4] {$s_7$}; 
	
\begin{scope}[every node/.style={scale=1}]
  \path (q1) edge node [pos=0.2] {$u_{p_1}$} (q2);
  \path (q2) edge  node [pos=0.2,left] {$u_{a_1}$} (q21);
  \path (q2) edge  node [pos=0.25,right] {$u_{a_2}$} (q3);
  \path (q2) edge  node [pos=0.2,right] {$u_{a_3}$} (q4);
  \path (q21) edge node [pos=0.2] {$u_{p_2}$} (q41);
  \path (q4) edge  node  [pos=0.2, left] {$u_{p_4}$} (q5);
  \path (q3) edge node  [pos=0.3, above] {$u_{p_3}$} (q41);
  \path (q41) [bend left] edge node [pos=0.1, left] {$u_{a_4}$} (q21);
  \path (q41) [bend right] edge node [pos=0.2,below] {$u_{a_5}$} (q3);
  \path (q5) [bend right] edge node [pos=0.1,right] {$u_{a_6}$} (q4);
 \end{scope}
\end{tikzpicture} \\
\end{center}
\normalsize{(B) \hspace{0.4\linewidth} (C) }
\vspace{-0.6cm}
\small
\begin{center}
\scalebox{0.4}{
\usetikzlibrary{arrows,positioning,automata,shadows,fit,shapes}
\begin{tikzpicture}[->,>=stealth',shorten >=1pt,auto, node distance=1.8cm,semithick,initial text=]

  \tikzstyle{every state}=[draw]
  \tikzstyle{a}=[circle,thick,draw=green!99,fill=green!30,minimum size=10mm]
  \tikzstyle{aunsafe}=[a,draw=blue!75,fill=blue!20]
  \tikzstyle{p}=[rectangle,thick,draw=green!99,
  			  fill=green!30,minimum size=6mm]
  \tikzstyle{punsafe}=[p,draw=blue!75,fill=blue!20]
  			  
	  \node[initial, initial where=above, p, minimum size=4ex] (q1) {}; 
	  \node[a, minimum size=4.5ex] (q2) [right of = q1] {}; 
      \node[p, minimum size=4ex] (q21) [below left = of q2] {}; 
	  \node[punsafe, minimum size=4.5ex] (q3) [below = of q2] {}; 
	  \node[p, minimum size=4ex] (q4) [below right = of q2] {};
      \node[a, minimum size=4.5ex] (q41) [below of = q21] {};  
	  \node[a, minimum size=4ex] (q5) [below of = q4] {}; 
	
\begin{scope}[every node/.style={scale=1}]
  \path (q1) edge node [pos=0.2] {} (q2);
  \path (q2) edge  node [pos=0.2,left] {} (q21);
  \path (q2) edge  [thick,red!75]   node [pos=0.25,right] {} (q3);
  \path (q2) edge  node [pos=0.2,right] {} (q4);
  \path (q21) edge node [pos=0.2] {} (q41);
  \path (q4) edge  node  [pos=0.2, left] {} (q5);
  \path (q3) edge node  [pos=0.3, above] {} (q41);
  \path (q41) [bend left] edge node [pos=0.1, left] {} (q21);
  \path (q41) [bend right,thick,red!75] edge node [pos=0.2,below] {} (q3);
  \path (q5) [bend right] edge node [pos=0.1,right] {} (q4);
 \end{scope}
\end{tikzpicture}
\hspace{3.5cm}
\begin{tikzpicture}[->,>=stealth',shorten >=1pt,auto, node distance=1.8cm,semithick,initial text=]

  \tikzstyle{every state}=[draw]
  \tikzstyle{a}=[circle,thick,draw=green!99,fill=green!30,minimum size=10mm]
  \tikzstyle{aunsafe}=[a,draw=blue!75,fill=blue!20]
  \tikzstyle{p}=[rectangle,thick,draw=green!99,
  			  fill=green!30,minimum size=6mm]
  \tikzstyle{punsafe}=[p,draw=blue!75,fill=blue!20]
  			  
	  \node[initial, initial where=above, p, minimum size=4ex] (q1) {}; 
	  \node[a, minimum size=4.5ex] (q2) [right of = q1] {}; 
      \node[p, minimum size=4ex] (q21) [below left = of q2] {}; 
	  \node[punsafe, minimum size=4.5ex] (q3) [below = of q2] {}; 
	  \node[p, minimum size=4ex] (q4) [below right = of q2] {};
      \node[a, minimum size=4.5ex] (q41) [below of = q21] {};  
	  \node[a, minimum size=4ex] (q5) [below of = q4] {}; 
	
\begin{scope}[every node/.style={scale=1}]
  \path (q1) edge node [pos=0.2] {} (q2);
  \path (q2) edge [thick,red!75]   node [pos=0.2,left] {} (q21);
  \path (q2) edge [thick,red!75]  node [pos=0.25,right] {} (q3);
  \path (q2) edge  node [pos=0.2,right] {} (q4);
  \path (q21) edge node [pos=0.2] {} (q41);
  \path (q4) edge  node  [pos=0.2, left] {} (q5);
  \path (q3) edge node  [pos=0.3, above] {} (q41);
  \path (q41) [bend left] edge node [pos=0.1, left] {} (q21);
  \path (q41) [bend right,thick,red!75] edge node [pos=0.2,below] {} (q3);
  \path (q5) [bend right] edge node [pos=0.1,right] {} (q4);
 \end{scope}
\end{tikzpicture}

}
\end{center}
\vspace{-0.2cm}
\caption{(A) An example of a game arena with a safety winning condition. The protagonist's and adversary's states are illustrated as squares and circles, respectively. The safe set $\safe$ is in green, the unsafe set $\unsafe$ in blue. Transitions are depicted as arrows between them and they are labeled with the respective inputs that trigger them. (B) shows the nominal adviser $\alpha^0$ and $\losing$ via marking the forbidden transitions and the states in $\losing$ in red. (C) shows an alternative adviser $\alpha'$ that is also good and less limiting than $\alpha^0$.}
\label{fig:example2}
\vspace{-0.7cm}
\end{figure}
\end{example}

\vspace{-0.2cm}

\begin{lemma}
{
Consider an adviser $\alpha'$ for $(\T,W_\safe)$ and suppose that there exists a state $s_a \in S_a$ and $u_a \in U_a$, such that $u_a \in \alpha^0(s_a)$ and $u_a \not \in \alpha'(s_a)$. Then $\alpha'$ is either not good or at least as limiting as the nominal adviser $\alpha^0$, i.e. $\lambda(\alpha^0) \leq \lambda(\alpha')$.
}
\label{lemma:nominal}
\end{lemma}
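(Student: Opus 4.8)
The plan is to dispose of the easy alternative first: if $\alpha'$ is not good we are immediately done, so assume $\alpha'$ is good and prove $\lambda(\alpha^0)\le\lambda(\alpha')$. The engine of the argument is that $\losing$ is an absorbing ``doomed'' region that survives the restriction imposed by \emph{any} adviser. Concretely, I would first prove the auxiliary claim that from every state of $\losing$ present in a non-blocking restricted arena $\T^{\alpha'}$, every play reaches $\unsafe$. This follows by induction on the rank $j$ with $s\in\losing^{j}\setminus\losing^{j-1}$ (base case $\losing^{0}=\unsafe$): by the recurrence on line~\ref{line:losing} all $\T$-successors of $s$ lie in $\losing^{j-1}$, and since an adviser only deletes adversary transitions and the non-blocking transformation only deletes states and incident edges, every successor that remains in $\T^{\alpha'}$ is still in $\losing^{j-1}$; non-blockingness supplies at least one such successor, so the rank strictly decreases along any continuation until $\unsafe$ is hit.

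From this claim I would deduce a reachability-confinement statement: under any winning strategy $\sigma_p^{\alpha'}\in\Omega_p^{\alpha'}$, no play of $\Plays^{\sigma_p^{\alpha'},\Sigma_a^{\alpha'}}$ visits a state of $\losing$, for otherwise the adversary could steer the play into $\losing$ and then force $\unsafe$, contradicting that $\sigma_p^{\alpha'}$ is winning. Hence every state visited under $\sigma_p^{\alpha'}$ lies in $S\setminus\losing$, which is exactly the state set of $\T^{\alpha^0}$. Next I would establish weight domination at visited states: at any reachable $s_a\in S_a\setminus\losing$, every adversary input that remains available in $\T^{\alpha'}$ must lead back into $S\setminus\losing$ (again by confinement), whereas by construction $\alpha^{0}(s_a)$ is precisely the set of inputs leading into $\losing$; this should give $\alpha^{0}(s_a)\subseteq\alpha'(s_a)$ and in particular $|\alpha^{0}(s_a)|\le|\alpha'(s_a)|$ at every such $s_a$.

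With these facts I would transfer the optimal winning strategy. Take $\sigma_p^{\alpha'}$ attaining (or approaching) $\lambda(\alpha')=\gamma(\sigma_p^{\alpha'})$; since all its plays stay in $S\setminus\losing$ and use only transitions also present in $\T^{\alpha^0}$, the same choices define a protagonist strategy on $\T^{\alpha^0}$ whose induced plays are again safe, so it lies in $\Omega_p^{\alpha^0}$. Because the per-visit weights obey $|\alpha^{0}(\cdot)|\le|\alpha'(\cdot)|$ along these plays, the worst-case long-term average it secures should be bounded by $\gamma(\sigma_p^{\alpha'})$, and taking the infimum over $\Omega_p^{\alpha^0}$ would yield $\lambda(\alpha^0)\le\lambda(\alpha')$.

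The delicate points, and where I expect the real work, are twofold. First, the inclusion $\alpha^0(s_a)\subseteq\alpha'(s_a)$ needs the into-$\losing$ inputs at $s_a$ to be \emph{explicitly} forbidden by $\alpha'$, whereas goodness only guarantees they are \emph{unavailable} in $\T^{\alpha'}$ — they might instead have been deleted by the non-blocking transformation while $\alpha'(s_a)$ itself stays small; since the level of limitation in \eqref{eq:gamma} counts $|\alpha'(\cdot)|$ rather than the effective out-degree, reconciling this accounting is essential. Second, $\T^{\alpha^0}$ may enable adversary moves (those in $\alpha'(s_a)\setminus\alpha^0(s_a)$) that $\T^{\alpha'}$ forbids; although each leads back into $S\setminus\losing$ and thus preserves safety, it could let the adversary reach states of larger $|\alpha^0|$-weight, so the transfer must show the protagonist can neutralize these extra options, for instance by confining the continuation to the recurrent set realized by $\sigma_p^{\alpha'}$. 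Overcoming precisely these two points, using the hypothesis $u_a\in\alpha^0(s_a)\setminus\alpha'(s_a)$ together with goodness of $\alpha'$, is the crux of the proof.
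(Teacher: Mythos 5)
Your steps 1--3 are sound, and they are exactly the machinery the paper itself uses: the rank-descent induction on $\losing^{j}$ is the argument of Lemma~\ref{lemma:1}, reused in the second half of the paper's proof of Lemma~\ref{lemma:nominal} to drive any play that touches $\losing$ into $\unsafe$. The gap is in your steps 4--5, and the two ``delicate points'' you flag are not residual technicalities to be worked out --- each one is fatal, and in fact each can be inflated into a counterexample to the very inequality $\lambda(\alpha^0)\leq\lambda(\alpha')$ under the paper's definitions. For your point (a): let $\alpha'$ forbid nothing at a visited state $s_a$ with $T(s_a,u_a)\in\losing$, but forbid \emph{all} inputs at an unsafe state downstream of $T(s_a,u_a)$; the into-$\losing$ transition is then deleted by the non-blocking transformation without ever being charged, $\alpha'$ is good and satisfies the lemma's hypothesis at $s_a$, yet $|\alpha'(s_a)|<|\alpha^0(s_a)|$ along the only play, so $\lambda(\alpha')<\lambda(\alpha^0)$. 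For your point (b): suppose a visited adversary state offers a safe loop back toward $s_\init$, one input into $\losing$, and one input into an absorbing safe region whose adversary states carry many nominal advises (weight $5$, say). Then $\alpha^0$ must leave the region open and $\lambda(\alpha^0)=5$, while an $\alpha'$ that forbids both the losing input and the entry (cost $2$) and drops a nominal advise only inside the now-unreachable region is good, satisfies the hypothesis, and has $\lambda(\alpha')=2$. So no completion of your transfer argument can succeed: $\gamma$ in Eq.~\eqref{eq:gamma} sups over the richer adversary of $\T^{\alpha^0}$, and per-state weight domination along $\T^{\alpha'}$-plays does not bound it.

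You should also know that the paper's own proof does not overcome these points; it compresses the entire value comparison into the single unproved sentence ``if such a winning play does not exist, it holds that $\gamma(\sigma_p^{\alpha'})\geq\lambda(\alpha^0)$,'' and both constructions above refute precisely that sentence (in each, no play under the winning strategy uses a transition in $\alpha^0(s_a)\setminus\alpha'(s_a)$, yet $\gamma(\sigma_p^{\alpha'})<\lambda(\alpha^0)$). The rank-descent half, which you and the paper share, is correct. What survives, and what the paper actually needs downstream (existence of a least-limiting good adviser with $\alpha^0(s_a)\subseteq\alpha^\star(s_a)$ pointwise), is a weaker comparison of $\alpha'$ with the pointwise union $\alpha'\cup\alpha^0$, together with some device closing the loophole in (a) --- e.g., charging restrictions effected through the non-blocking transformation, or excluding advisers that fully block a state. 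That, however, is a different statement than the lemma as written, so your honest refusal to paper over steps 4--5 was the right instinct.
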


{
\begin{proof}
The proof is lead by contradiction. Consider an adviser $\alpha'$ for $(\T,W_\safe)$. Suppose that there exists a state $s_a \in S_a$ and $u_a \in U_a$, such that $u_a \in \alpha^0(s_a)$ and $u_a \not \in \alpha'(s_a)$ and $\alpha'$ is good. Furthermore, let $\Omega_p^{\alpha'}$ be the set of protagonist's winning strategies on the non-blocking adviser restricted arena $\T^{\alpha'}$. and assume that $\alpha'$ is less limiting that $\alpha^0$, i.e. that $\lambda(\alpha')< \lambda(\alpha^0) $. Then from the definition of $\lambda$ in~Eq. \eqref{eq:alpha}, there exists a protagonist's strategy $\sigma_p^{\alpha'} \in \Omega_p^{\alpha'}$, such that $\gamma(\sigma_p^{\alpha'}) < \lambda(\alpha^0)$.
Henceforth, there also exists a winning play $\pi = s_{p,1}s_{a,1}s_{p,2}s_{a,2} \ldots \in \Plays^{\sigma_p^{\alpha'},\Sigma_a^{\alpha'}}$ on $\T^{\alpha'}$ with the property that for some $k\geq 1$, $s_{p,k+1} \in T(s_{a,k},u_a)$, where $u_a \not \in \alpha'(s_{a,k})$ and $u_a \in \alpha^0(s_{a,k})$. If such a winning play does not exist, it holds that $\gamma(\sigma_p^{\alpha'}) \geq \lambda(\alpha^0)$, which contradicts the assumption that $\alpha'$ is less limiting than $\alpha^0$. Since $u_a \in \alpha^0(s_{a,k})$, it holds $s_{p,k+1} \in \losing$ by construction (line~\ref{line:adviser}). Either $s_{p,k+1} \in \unsafe$, which directly contradicts the assumption that $\alpha'$ is good, or $s_{p,k+1} \in \losing^j$, for some $j \geq 1$. From the iterative construction of $\losing$, we obtain $s_{a,k+1} \in \losing^{j-1}$ and if $s_{a,k+1} \not \in \unsafe$, then $s_{p,k+2} \in \losing^{j-2}$. By inductive reasoning it follows that there exists $\ell \geq k+1$, such that either $s_{p,\ell} \in \losing^0 = \unsafe$, or $s_{a,\ell} \in \losing^0 = \unsafe$. This contradicts the assumption that $\pi$ is winning, i.e. the assumption that $\alpha'$ is good.
\end{proof}
}

Thanks to Lemma \ref{lemma:nominal}, we know 
that there exists a good adviser $\alpha^\star$ that is least-limiting and builds on the nominal one in the following sense: $\alpha^0(s_a) \subseteq \alpha^\star (s_a)$, for all $s_a \in S_a$. 
\new{Whereas following the nominal adviser is essential for maintaining the system safety, following the additional restrictions suggested by $\alpha^\star$ can be perceived as a weak form of advice. If this advice is not respected by the adversary, safety is not necessarily going to be violated, however, in order to maintain safety, the adversary might need to obey further, more limiting advises. We will discuss later on in Sec.~\ref{sec:application} how to use both the combination of a least-limiting adviser and the nominal one in order to guide the adversary during the system execution (the play on the game arena).}

\subsection{Least-limiting solution}
\label{sec:least}

Let $\dot A_{\mathit{cand}}$ denote the finite set of candidate advisers obtained from the nominal adviser $\alpha^0$, 
$\dot A_{\mathit{cand}} = \{\alpha \mid \alpha^0(s_a) \subseteq \alpha (s_a) , \text{ for all }  s_a \in S_a \}.$
\new{Note that $\alpha \in \dot{A}_\mathit{cand}$ does not have to be good since it might not allow for an existence of a non-blocking adviser restricted arena $\T^\alpha$. As outlined in Sec.~\ref{sec:games}, it can be however decided whether $\dot \T^\alpha$ from Def.~\ref{def:restricted} has an equivalent non-blocking arena $\T^\alpha$. Building on ideas from Lemmas~\ref{lemma:1} and~\ref{lemma:2}, we can easily see that the existence of non-blocking adviser restricted arena $\T^\alpha$ also implies the existence of a protagonist's winning strategy $\sigma_p^\alpha \in \Omega_p^\alpha$. In fact, because states from $\losing$ were removed from $ \T^{\alpha^0}$ (lines~\ref{line:unsafe}--\ref{line:unsafeend}, \ref{line:while}--\ref{line:whileend} of Alg.~\ref{alg:nominal}), all plays in $\T^\alpha$ are winning and $\Sigma_p^\alpha = \Omega_p^\alpha$.
} \begin{align}\cand = \{\alpha \in \dot A_{\mathit{cand}} \mid \alpha \text{ is a good adviser} \}.
\label{eq:cand}
\end{align}

From Lemma~\ref{lemma:nominal} and the construction of $\cand$, at least one least-limiting good adviser belongs to $\cand$. In the remainder of the solution, we focus on solving the following sub-problem for each $\alpha \in \cand$.



\vspace{-0.2cm}

\new{
\begin{problem}
Consider a good adviser $\alpha \in \cand$. Find $\lambda(\alpha)$ and an optimal protagonist's winning strategy $\sigma_p^{\alpha\star}$ with $\gamma(\sigma_p^{\alpha\star})= \inf_{\sigma_p^\alpha \in \Omega_p^\alpha}\gamma(\sigma_p^\alpha) = \inf_{\sigma_p^\alpha \in \Sigma_p^\alpha}\gamma(\sigma_p^\alpha) .$ 
\label{problem:2}
\end{problem}

We propose to translate Problem \ref{problem:2} to finding an optimal strategy to a mean-payoff game on a modified arena $\widetilde \T^\alpha$:

\vspace{-0.2cm}

\begin{definition}[Mean-payoff game arena $\widetilde \T^\alpha$] Given a non-blocking adviser restricted arena $\T^\alpha = (S^
\alpha, \langle S_p^\alpha,S_a^\alpha \rangle, s_\init, U_p, U_a, T_p^\alpha \cup T_a^\alpha)$, we define the mean-payoff game arena $\widetilde{\T}^\alpha = (\mathcal T^\alpha, w),$ where 
for all $\widetilde T_p(s_p,u_p) = s_a$, $w(s_p,s_a) = - |\alpha(s_a)|$ and for all $\widetilde T_a(s_a,u_a) = s_p$, $w(s_a,s_p) = 0$.
\label{def:mean}
\end{definition}

\vspace{-0.2cm}

\begin{lemma}
Problem~\ref{problem:2} reduces to the problem of optimal strategy synthesis for the mean-payoff game $\widetilde{\T}^\alpha$.
\end{lemma}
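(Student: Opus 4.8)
The plan is to show that the value $\nu$ secured by a protagonist's strategy in the mean-payoff game $\widetilde{\T}^\alpha$ is, up to an order-reversing affine transformation, exactly the cost $\gamma$ of Eq.~\eqref{eq:gamma}. Then maximizing $\nu$ over protagonist strategies coincides with minimizing $\gamma$, i.e. with computing $\lambda(\alpha)$ and an optimal $\sigma_p^{\alpha\star}$ as required in Problem~\ref{problem:2}. Concretely, I would establish the identity
\begin{align}
\nu(\sigma_p^\alpha) = -\tfrac{1}{2}\,\gamma(\sigma_p^\alpha) \qquad \text{for every } \sigma_p^\alpha \in \Sigma_p^\alpha,
\end{align}
from which the reduction is immediate.

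First I would note that $w$ is well defined on $\widetilde{\T}^\alpha$: since the arena is bipartite, every transition goes either from $S_p^\alpha$ to $S_a^\alpha$ or vice versa, and $-|\alpha(s_a)|$ depends only on the target $s_a$, so no state pair receives two conflicting weights. Next, I fix strategies and the induced play $\pi = \pi^{\sigma_p^\alpha,\sigma_a^\alpha}$. Because protagonist edges carry weight $-|\alpha(\cdot)|$ and adversary edges carry weight $0$, the prefix sum over the first $2m$ edges telescopes,
\begin{align}
\tfrac{1}{2m}\sum_{j=1}^{2m} w(\pi(j),\pi(j+1)) = -\tfrac{1}{2}\cdot\tfrac{1}{m}\sum_{k=1}^{m}\big|\alpha(\pi(2k))\big|.
\end{align}
Since $|\alpha(s_a)| \le |U_a|$ is uniformly bounded, consecutive normalized prefix sums differ by $O(1/m)$, so restricting the truncation point to even indices changes neither the $\liminf$ nor the $\limsup$; using $\liminf_m(-x_m) = -\limsup_m x_m$ I then obtain, for the fixed play,
\begin{align}
\liminf_{n\to\infty}\tfrac{1}{n}\sum_{j=1}^{n} w(\pi(j),\pi(j+1)) = -\tfrac{1}{2}\,\limsup_{n\to\infty}\tfrac{1}{n}\sum_{j=1}^{n}\big|\alpha(\pi(2j))\big|.
\end{align}

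Taking the infimum over $\sigma_a^\alpha$ on the left (the definition of $\nu$) and applying $\inf(-x)=-\sup(x)$ turns the right-hand side into a supremum over $\sigma_a^\alpha$, which is precisely $\gamma(\sigma_p^\alpha)$; this yields the displayed identity. Finally I take the supremum over protagonist strategies: because $\alpha \in \cand$ is good, every play of $\T^\alpha$ is winning and $\Omega_p^\alpha = \Sigma_p^\alpha$, whence
\begin{align}
\sup_{\sigma_p^\alpha\in\Sigma_p^\alpha}\nu(\sigma_p^\alpha) = -\tfrac{1}{2}\inf_{\sigma_p^\alpha\in\Omega_p^\alpha}\gamma(\sigma_p^\alpha) = -\tfrac{1}{2}\,\lambda(\alpha),
\end{align}
and an optimal mean-payoff strategy $\sigma_p^{\alpha\star}$ (maximizing $\nu$) is exactly a minimizer of $\gamma$, i.e. the optimal protagonist's winning strategy sought in Problem~\ref{problem:2}, with $\lambda(\alpha) = -2\,\nu(\sigma_p^{\alpha\star})$. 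Invoking the existence of memoryless optimal strategies for mean-payoff games (e.g.,~\cite{brim}) then additionally gives a memoryless $\sigma_p^{\alpha\star}$ and establishes decidability.

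I expect the main obstacle to be bookkeeping the two order-reversals so they compose correctly: the sign flip converts $\liminf$ into $\limsup$ and the adversary's secured-value $\inf$ into the worst-case $\sup$ of Eq.~\eqref{eq:gamma}, while the protagonist's outer $\sup$ of the mean-payoff value must match the $\inf$ defining $\lambda$ in Eq.~\eqref{eq:alpha}. Confirming that these negations align (so that maximizing $\nu$ genuinely minimizes $\gamma$), together with the subsidiary argument that the time-average is insensitive to the parity of the truncation point (justified by boundedness of $|\alpha(\cdot)|$), is the delicate part; the telescoping of weights and the final reduction are then routine.
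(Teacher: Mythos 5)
Your proposal is correct and follows essentially the same route as the paper: translate to the mean-payoff game $\widetilde{\T}^\alpha$, identify its secured value with $\gamma$ via negation of the weights, and use $\Omega_p^\alpha = \Sigma_p^\alpha$ (all plays in $\T^\alpha$ are winning) so that maximizing $\nu$ computes $\lambda(\alpha)$ and an optimal $\sigma_p^{\alpha\star}$. You are in fact more careful than the paper's own proof, whose displayed chain of equalities omits both the order-reversing minus sign and the factor $\tfrac{1}{2}$ coming from the per-step versus per-round normalization (so its claim $\nu(\widetilde{\sigma}_p^{\alpha\star}) = \lambda(\alpha)$ holds only up to your affine transformation $\nu = -\tfrac{1}{2}\gamma$, which, being monotone, leaves the reduction and the optimal strategy unaffected).
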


\begin{proof}
The optimal strategy  $\widetilde{\sigma}_p^{\alpha\star}$ for the mean-payoff game $\widetilde{\T^\alpha}$ obtained e.g., by the algorithm from~\cite{brim} has the  value $ \nu(\widetilde{\sigma}_p^{\alpha\star}) = $\vspace{-0.5cm}
\begin{align*}
& \sup_{\sigma_p^\alpha \in \Sigma_p^\alpha}\inf_{\sigma_a^\alpha \in \Sigma_a^\alpha} \liminf_{n \rightarrow \infty} \frac{1}{n} \sum_{j=1}^{n} w(\pi^{\sigma_p^\alpha,\sigma_a^\alpha}(j),\pi^{\sigma_p^\alpha,\sigma_a^\alpha}(j+1) )=\\
&\inf_{\sigma_p^\alpha \in \Sigma_p^\alpha}\sup_{\sigma_a^\alpha \in \Sigma_a^\alpha} \limsup_{n \rightarrow \infty} \frac{1}{n} \sum_{j=1}^{n} -w(\pi^{\sigma_p^\alpha,\sigma_a^\alpha}(j),\pi^{\sigma_p^\alpha,\sigma_a^\alpha}(j+1) )=\\
& \inf_{\sigma_p^\alpha \in \Sigma_p^\alpha}\sup_{\sigma_a^\alpha \in \Sigma_a} \limsup_{n \rightarrow \infty} \frac{1}{n} \sum_{j=1}^{n} |\alpha(\pi^{\sigma_p^\alpha,\sigma_a^\alpha}(2j)| = \lambda(\alpha).
\end{align*}

Furthermore, as noted above $\Sigma_p^\alpha = \Omega^\alpha_p$ and hence the proof is complete.
\end{proof}

It has been shown in~\cite{positional} that in mean-payoff games, memoryless strategies suffice to achieve the optimal value. In fact, using the algorithm from~\cite{brim}, the strategy $\widetilde \sigma_p^{\alpha\star}$ takes the form of a memoryless strategy $\widetilde \varsigma_p^{\alpha\star}: S_p^\alpha \rightarrow U_p$. 

\vspace{-0.1cm}

\subsection{Overall solution}

\vspace{-0.1cm}

We summarize how the algorithms from Sec.~\ref{sec:nominal} and Sec.~\ref{sec:least} serve in finding a solution to Problem~\ref{problem:main}.
1) The nominal adviser $\alpha^0$ is built according to Alg.~\ref{alg:nominal}. If there does not exist a non-blocking adviser restricted arena $\T^{\alpha^0}$, then there does not exist a solution to Problem~\ref{problem:main}.
2) The set of candidate advisers $\cand$ is built according to Eq.~\eqref{eq:cand}.
3) For each candidate adviser $\alpha \in \cand$, the value $\lambda(\alpha)$  and the memoryless optimal protagonist's winning strategy $\varsigma_p^{\alpha \star} \in \Omega_p^\alpha$ are computed through the translation to a mean-payoff game optimal strategy synthesis according to Def.~\ref{def:mean}.
4) An adviser $\alpha^\star \in \cand$ with $\lambda(\alpha^\star) = \inf_{\alpha\in \cand} \lambda(\alpha)$ together with its associated optimal strategy $\varsigma_p^{\alpha^\star\star}$ are the solution to Problem~\ref{problem:main}.

}

\subsection{Guided system execution}
\label{sec:application}

\new{Finally, we discuss how the set of good advisers $\cand$ can be used to guide the adversary on-the-fly during the system execution. 
Given an adviser $\alpha \in \cand$, let us call the fact that $u_a \in \alpha(s_a)$ an \emph{advise}. We distinguish two types of advises, \emph{hard} and \emph{soft}. 
Hard advises are the ones imposed by the nominal adviser, 
$u_a \in \alpha^0(s_a)$, while soft are the remaining ones that can be violated without jeopardizing the system safety. The goal of the guided execution is to permit the adversary to disobey a soft advise and react to this event via a switch to another, possibly more limiting adviser that does not contain this soft advise.
Let $\preceq$ be a partial ordering on the set $\cand$, where $\alpha \preceq \alpha'$ if $\alpha(s_a) \subseteq \alpha'(s_a)$, for all $s_a \in S_a$. Hence, for the nominal adviser $\alpha^0$, it holds that $\alpha^0 \preceq \alpha$, for all $\alpha \in \cand$. 

The system execution that corresponds to a play in $\T$ proceeds as follows:
1) The system starts at the initial state $s_\curr=s_\init$ with the current adviser being least-limiting adviser $\alpha_\curr = \alpha^\star$ and the current protagonist's strategy being the memoryless winning strategy $\varsigma_{p,\curr} = \varsigma_p^{\alpha^\star\star}$. 
2) The input $\varsigma_{p,\curr}(s_\curr)$ is applied by the protagonist and the system changes its current state $s_\curr$ according to $T_p$. The current state belongs to the adversary.
3) $\alpha_\curr(s_\curr)$ is provided. The adversary chooses an input $u_a \in U_a^{s_\curr}$. 
a) If $u_a \not \in \alpha_\curr(s_\curr)$, then the system updates its state $s_\curr$ according to $T_a$ and proceeds with step $2$. 
b) If $u_a \in \alpha^0(s_\curr)$ then hard advise is disobeyed and system safety will be unavoidably violated and the system needs to stop immediately.
c) If $u_a \in \alpha_\curr(s_\curr)$, but $u_a \not \in \alpha^0(s_\curr)$, then only a soft advise is disobeyed. The current adviser $\alpha_\curr$ is updated to $\alpha'$, with the property that $\lambda(\alpha') = \inf_{\alpha \in A_\preceq} \lambda(\alpha)$, where $A_\preceq = \{\alpha \in \cand \mid \alpha \preceq \alpha_\curr\}$ and the current protagonist's strategy $\varsigma_{p,\curr}$ is updated to $\varsigma_{p}^{\alpha'\star}$. The current state $s_\curr$ is updated according to $T_a$ and the system proceeds with step 2).

}

\vspace{-0.2cm}

\begin{example}
\new{

Consider the safety game in Fig.~\ref{fig:example3}.(A). The result of the nominal adviser computation according to Alg.~\ref{alg:nominal} is illustrated in Fig.~\ref{fig:example3}.(B). Namely, $\alpha^0(s_2) = \{u_{a_2}\}$,
$\alpha^0(s_6) = \emptyset$,
$\alpha^0(s_7) = \{u_{a_5}\}$,  $\alpha^0(s_8) = \{u_{a_7},u_{a_8}\}$, and $\alpha^0(s_{11}) = u_{a_9}$. The states in $\losing$ are marked in red. Fig. \ref{fig:example3}.(C) shows the non-blocking adviser restricted arena $\T^{\alpha^0}$ with the removed states and transitions in light grey. 
The corresponding optimal protagonist's winning strategy $\varsigma_p^{\alpha^0\star}$ in $\T^{\alpha^0}$ is highlighted in green in Fig.~\ref{fig:example3}.(B), i.e. $\varsigma_p^{\alpha^0\star}(s_1) = u_{p_1}$, $\varsigma_p^{\alpha^0\star}(s_3) = u_{p_2}$, $\varsigma_p^{\alpha^0\star}(s_5) = u_{p_5}$, and $\varsigma_p^{\alpha^0\star}(s_9) = u_{p_6}$.
The level of limitation of $\alpha^0$ is $\lambda(\alpha^0) = \limsup_{n\rightarrow \infty} \frac{1}{n} (|\alpha^0(s_2)| + \sum_{j=2}^n |\alpha^0(s_8)|)$ = $\limsup_{n\rightarrow \infty} \frac{1}{n} (2n-1)$. 
Fig. \ref{fig:example3}.(D) shows least-limiting adviser $\alpha^\star$. As opposed to $\alpha^0$, $\alpha^\star(s_2) = \{u_{a_2},u_{a_3}\}$, where the advise $u_{a_3} \in \alpha^\star(s_2)$ (in magenta) is soft. Fig.~\ref{fig:example3}.(E) illustrates the non-blocking adviser restricted arena $\T^{\alpha^\star}$. The optimal protagonist's winning strategy is the only protagonist's strategy in $\T^{\alpha^\star}$. The level of limitation of $\alpha^\star$ is $\lambda(\alpha^\star) = \limsup_{n\rightarrow \infty} \frac{1}{n} ( |\alpha^\star(s_2)|  + \sum_{j=2}^n |\alpha^0(s_6)|)$ = $\limsup_{n\rightarrow \infty} \frac{1}{n}  < \lambda(\alpha^0)$.
There exist more good advisers $\alpha' \in \cand$. For each of them, either $\lambda(\alpha') = \lambda(\alpha^0)$ or $\lambda(\alpha') = \lambda(\alpha^\star)$.

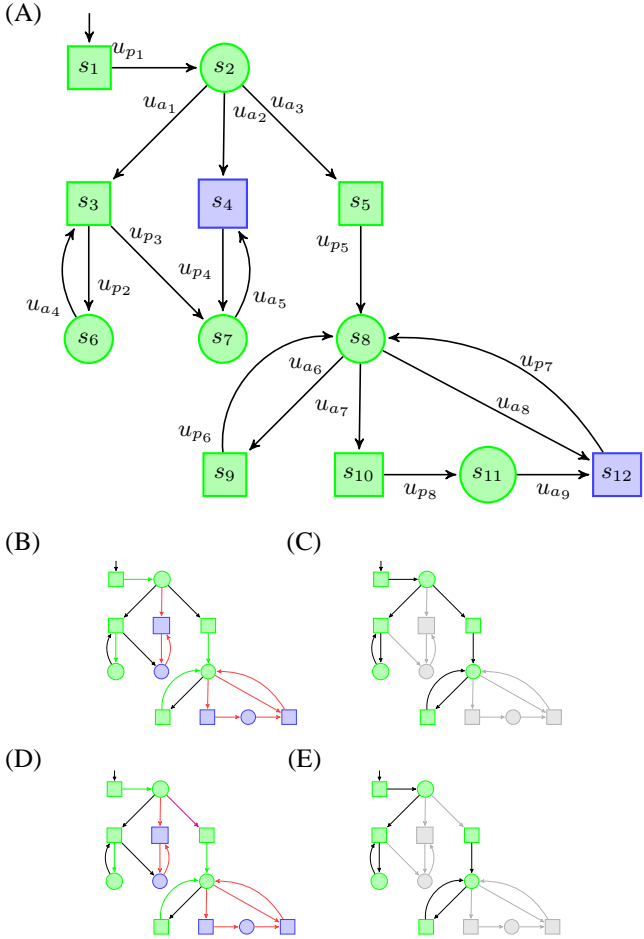
\begin{figure}[!t]
\vspace{0.2cm}
(A)
\vspace{-0.6cm}
\small
\begin{center}
\usetikzlibrary{arrows,positioning,automata,shadows,fit,shapes}
\begin{tikzpicture}[->,>=stealth',shorten >=1pt,auto, node distance=1.8cm,semithick,initial text=]

  \tikzstyle{every state}=[draw]
  \tikzstyle{a}=[circle,thick,draw=green!99,fill=green!30,minimum size=10mm]
  \tikzstyle{aunsafe}=[a,draw=blue!75,fill=blue!20]
  \tikzstyle{p}=[rectangle,thick,draw=green!99,
  			  fill=green!30,minimum size=6mm]
  \tikzstyle{punsafe}=[p,draw=blue!75,fill=blue!20]
  			  
	  \node[initial, initial where=above, p, minimum size=4ex] (q1) {$s_1$}; 
	  \node[a, minimum size=4.5ex] (q2) [right of = q1] {$s_2$}; 
      \node[p, minimum size=4ex] (q21) [below left = of q2] {$s_3$}; 
	  \node[punsafe, minimum size=4.5ex] (q3) [left = 1.2cm of q4] {$s_4$}; 
	  \node[p, minimum size=4ex] (q4) [below right = of q2] {$s_5$};
      \node[a, minimum size=4.5ex] (q41) [below of = q21] {$s_6$};  
 	   \node[a, minimum size=4ex] (q6) [below of = q3] {$s_7$};
	   \node[a, minimum size=4ex] (q7) [below of = q4] {$s_8$};  
	   \node[p, minimum size=4ex] (q8) [below left = of q7] {$s_{9}$}; 
	   \node[p, minimum size=4ex] (q9) [right = 1.15cm of q8] {$s_{10}$};
	   \node[a, minimum size=4ex] (q10) [right = 1cm of q9] {$s_{11}$};
	   \node[punsafe, minimum size=4ex] (q11) [right = 1cm of q10] {$s_{12}$};
\begin{scope}[every node/.style={scale=1}]
  \path (q1) edge node [pos=0.2] {$u_{p_1}$} (q2);
  \path (q2) edge  node [pos=0.2,left] {$u_{a_1}$} (q21);
  \path (q2) edge  node [pos=0.25,right] {$u_{a_2}$} (q3);
  \path (q2) edge  node [pos=0.2,right] {$u_{a_3}$} (q4);
  \path (q21) edge node [pos=0.7] {$u_{p_2}$} (q41);
  \path (q4) edge  node  [pos=0.2, left] {$u_{p_5}$} (q5);
  \path (q21) edge node  [pos=0.1, right] {$u_{p_3}$} (q6);
  \path (q41) [bend left] edge node [pos=0.1, left] {$u_{a_4}$} (q21);
  \path (q6) [bend right] edge node [pos=0.2,right] {$u_{a_5}$} (q3);
  \path (q3) edge node [pos=0.5,left] {$u_{p_4}$} (q6);
  \path (q7) edge node [pos=0.1,left] {$u_{a_6}$} (q8);
  \path (q7) edge node [pos=0.5,left] {$u_{a_7}$} (q9);
  \path (q7) edge node [pos=0.5,right] {$u_{a_8}$} (q11);
  \path (q11) edge [bend right] node [pos=0.5,right] {$u_{p_7}$} (q7);
  \path (q9) edge node [pos=0.5,below] {$u_{p_8}$} (q10);
  \path (q10) edge node [pos=0.5,below] {$u_{a_9}$} (q11);
   \path (q8) edge [bend left = 50] node [pos=0.1,left] {$u_{p_6}$} (q7);

 \end{scope}
\end{tikzpicture} \\
\end{center}
\normalsize{(B) \hspace{0.35\linewidth} (C) }
\vspace{-0.2cm}
\small
\begin{center}
\scalebox{0.34}{
\usetikzlibrary{arrows,positioning,automata,shadows,fit,shapes}
\begin{tikzpicture}[->,>=stealth',shorten >=1pt,auto, node distance=1.8cm,semithick,initial text=]

  \tikzstyle{every state}=[draw]
  \tikzstyle{a}=[circle,thick,draw=green!99,fill=green!30,minimum size=10mm]
  \tikzstyle{aunsafe}=[a,draw=blue!75,fill=blue!20]
  \tikzstyle{p}=[rectangle,thick,draw=green!99,
  			  fill=green!30,minimum size=6mm]
  \tikzstyle{punsafe}=[p,draw=blue!75,fill=blue!20]
  \tikzstyle{fp}=[p,draw=blue!75,fill=blue!20]
  \tikzstyle{fa}=[a,draw=blue!75,fill=blue!20]  			  
	  \node[initial, initial where=above, p, minimum size=4ex] (q1) {}; 
	  \node[a, minimum size=4.5ex] (q2) [right of = q1] {}; 
      \node[p, minimum size=4ex] (q21) [below left = of q2] {}; 
	  \node[fp, minimum size=4.5ex] (q3) [left = 1.2cm of q4] {}; 
	  \node[p, minimum size=4ex] (q4) [below right = of q2] {};
      \node[a, minimum size=4.5ex] (q41) [below of = q21] {};  
 	   \node[fa, minimum size=4ex] (q6) [below of = q3] {};
	   \node[a, minimum size=4ex] (q7) [below of = q4] {};  
	   \node[p, minimum size=4ex] (q8) [below left = of q7] {}; 
	   \node[fp, minimum size=4ex] (q9) [right = 1.15cm of q8] {};
	   \node[fa, minimum size=4ex] (q10) [right = 1cm of q9] {};
	   \node[fp, minimum size=4ex] (q11) [right = 1cm of q10] {};
\begin{scope}[every node/.style={scale=1}]
  \path (q1) edge [green!99,thick] node [pos=0.2] {} (q2);
  \path (q2) edge  node [pos=0.2,left] {} (q21);
  \path (q2) edge [red!75,thick] node [pos=0.25,right] {} (q3);
  \path (q2) edge  node [pos=0.2,right] {} (q4);
  \path (q21) edge [green!99,thick] node [pos=0.7] {} (q41);
  \path (q4) [green!99,thick] edge  node  [pos=0.2, left] {} (q5);
  \path (q21) edge node  [pos=0.1, right] {} (q6);
  \path (q41) [bend left] edge node [pos=0.1, left] {} (q21);
  \path (q6) [red!75,thick,bend right] edge node [pos=0.2,right] {} (q3);
  \path (q3) edge [red!75,thick]  node [pos=0.5,left] {} (q6);
  \path (q7) edge node [pos=0.1,left] {} (q8);
  \path (q7) edge [red!75,thick] node [pos=0.5,left] {} (q9);
  \path (q7) edge [red!75,thick] node [pos=0.5,right] {} (q11);
  \path (q11) edge [red!75,thick,bend right] node [pos=0.5,right] {} (q7);
  \path (q9) edge [red!75,thick]  node [pos=0.5,below] {} (q10);
  \path (q10) edge [red!75,thick]  node [pos=0.5,below] {} (q11);
   \path (q8) edge [green!99,thick,bend left = 50] node [pos=0.1,left] {} (q7);

 \end{scope}
\end{tikzpicture} 

\hspace{2.5cm}

\usetikzlibrary{arrows,positioning,automata,shadows,fit,shapes}
\begin{tikzpicture}[->,>=stealth',shorten >=1pt,auto, node distance=1.8cm,semithick,initial text=]

  \tikzstyle{every state}=[draw]
  \tikzstyle{a}=[circle,thick,draw=green!99,fill=green!30,minimum size=10mm]
  \tikzstyle{aunsafe}=[a,draw=blue!75,fill=blue!20]
  \tikzstyle{p}=[rectangle,thick,draw=green!99,
  			  fill=green!30,minimum size=6mm]
  \tikzstyle{punsafe}=[p,draw=blue!75,fill=blue!20]
  \tikzstyle{fp}=[p,draw=black!30,fill=black!10]
  \tikzstyle{fa}=[a,draw=black!30,fill=black!10]  			  
	  \node[initial, initial where=above, p, minimum size=4ex] (q1) {}; 
	  \node[a, minimum size=4.5ex] (q2) [right of = q1] {}; 
      \node[p, minimum size=4ex] (q21) [below left = of q2] {}; 
	  \node[fp, minimum size=4.5ex] (q3) [left = 1.2cm of q4] {}; 
	  \node[p, minimum size=4ex] (q4) [below right = of q2] {};
      \node[a, minimum size=4.5ex] (q41) [below of = q21] {};  
 	   \node[fa, minimum size=4ex] (q6) [below of = q3] {};
	   \node[a, minimum size=4ex] (q7) [below of = q4] {};  
	   \node[p, minimum size=4ex] (q8) [below left = of q7] {}; 
	   \node[fp, minimum size=4ex] (q9) [right = 1.15cm of q8] {};
	   \node[fa, minimum size=4ex] (q10) [right = 1cm of q9] {};
	   \node[fp, minimum size=4ex] (q11) [right = 1cm of q10] {};
\begin{scope}[every node/.style={scale=1}]
  \path (q1) edge  node [pos=0.2] {} (q2);
  \path (q2) edge  node [pos=0.2,left] {} (q21);
  \path (q2) edge [black!30,thick] node [pos=0.25,right] {} (q3);
  \path (q2) edge  node [pos=0.2,right] {} (q4);
  \path (q21) edge node [pos=0.7] {} (q41);
  \path (q4) edge  node  [pos=0.2, left] {} (q5);
  \path (q21) edge  [black!30] node  [pos=0.1, right] {} (q6);
  \path (q41) [bend left] edge node [pos=0.1, left] {} (q21);
  \path (q6) [black!30,bend right] edge node [pos=0.2,right] {} (q3);
  \path (q3) edge [black!30]  node [pos=0.5,left] {} (q6);
  \path (q7) edge node [pos=0.1,left] {} (q8);
  \path (q7) edge [black!30] node [pos=0.5,left] {} (q9);
  \path (q7) edge [black!30] node [pos=0.5,right] {} (q11);
  \path (q11) edge [black!30,bend right] node [pos=0.5,right] {} (q7);
  \path (q9) edge [black!30]  node [pos=0.5,below] {} (q10);
  \path (q10) edge [black!30]  node [pos=0.5,below] {} (q11);
   \path (q8) edge [bend left = 50] node [pos=0.1,left] {} (q7);

 \end{scope}
\end{tikzpicture}}
\end{center}
\normalsize{(D) \hspace{0.35\linewidth} (E) }
\vspace{-0.3cm}
\small
\begin{center}
\scalebox{0.34}{
\usetikzlibrary{arrows,positioning,automata,shadows,fit,shapes}
\begin{tikzpicture}[->,>=stealth',shorten >=1pt,auto, node distance=1.8cm,semithick,initial text=]

  \tikzstyle{every state}=[draw]
  \tikzstyle{a}=[circle,thick,draw=green!99,fill=green!30,minimum size=10mm]
  \tikzstyle{aunsafe}=[a,draw=blue!75,fill=blue!20]
  \tikzstyle{p}=[rectangle,thick,draw=green!99,
  			  fill=green!30,minimum size=6mm]
  \tikzstyle{punsafe}=[p,draw=blue!75,fill=blue!20]
  \tikzstyle{fp}=[p,draw=blue!75,fill=blue!20]
  \tikzstyle{fa}=[a,draw=blue!75,fill=blue!20]  			  
	  \node[initial, initial where=above, p, minimum size=4ex] (q1) {}; 
	  \node[a, minimum size=4.5ex] (q2) [right of = q1] {}; 
      \node[p, minimum size=4ex] (q21) [below left = of q2] {}; 
	  \node[fp, minimum size=4.5ex] (q3) [left = 1.2cm of q4] {}; 
	  \node[p, minimum size=4ex] (q4) [below right = of q2] {};
      \node[a, minimum size=4.5ex] (q41) [below of = q21] {};  
 	   \node[fa, minimum size=4ex] (q6) [below of = q3] {};
	   \node[a, minimum size=4ex] (q7) [below of = q4] {};  
	   \node[p, minimum size=4ex] (q8) [below left = of q7] {}; 
	   \node[fp, minimum size=4ex] (q9) [right = 1.15cm of q8] {};
	   \node[fa, minimum size=4ex] (q10) [right = 1cm of q9] {};
	   \node[fp, minimum size=4ex] (q11) [right = 1cm of q10] {};
\begin{scope}[every node/.style={scale=1}]
  \path (q1) edge [green!99,thick] node [pos=0.2] {} (q2);
  \path (q2) edge  node [pos=0.2,left] {} (q21);
  \path (q2) edge [red!75,thick] node [pos=0.25,right] {} (q3);
  \path (q2) edge [magenta,thick] node [pos=0.2,right] {} (q4);
  \path (q21) edge [green!99,thick] node [pos=0.7] {} (q41);
  \path (q4) [green!99,thick] edge  node  [pos=0.2, left] {} (q5);
  \path (q21) edge node  [pos=0.1, right] {} (q6);
  \path (q41) [bend left] edge node [pos=0.1, left] {} (q21);
  \path (q6) [red!75,thick,bend right] edge node [pos=0.2,right] {} (q3);
  \path (q3) edge [red!75,thick]  node [pos=0.5,left] {} (q6);
  \path (q7) edge node [pos=0.1,left] {} (q8);
  \path (q7) edge [red!75,thick] node [pos=0.5,left] {} (q9);
  \path (q7) edge [red!75,thick] node [pos=0.5,right] {} (q11);
  \path (q11) edge [red!75,thick,bend right] node [pos=0.5,right] {} (q7);
  \path (q9) edge [red!75,thick]  node [pos=0.5,below] {} (q10);
  \path (q10) edge [red!75,thick]  node [pos=0.5,below] {} (q11);
   \path (q8) edge [green!99,thick,bend left = 50] node [pos=0.1,left] {} (q7);

 \end{scope}
\end{tikzpicture} 

\hspace{2.5cm}
\usetikzlibrary{arrows,positioning,automata,shadows,fit,shapes}
\begin{tikzpicture}[->,>=stealth',shorten >=1pt,auto, node distance=1.8cm,semithick,initial text=]

  \tikzstyle{every state}=[draw]
  \tikzstyle{a}=[circle,thick,draw=green!99,fill=green!30,minimum size=10mm]
  \tikzstyle{aunsafe}=[a,draw=blue!75,fill=blue!20]
  \tikzstyle{p}=[rectangle,thick,draw=green!99,
  			  fill=green!30,minimum size=6mm]
  \tikzstyle{punsafe}=[p,draw=blue!75,fill=blue!20]
  \tikzstyle{fp}=[p,draw=black!30,fill=black!10]
  \tikzstyle{fa}=[a,draw=black!30,fill=black!10]  			  
	  \node[initial, initial where=above, p, minimum size=4ex] (q1) {}; 
	  \node[a, minimum size=4.5ex] (q2) [right of = q1] {}; 
      \node[p, minimum size=4ex] (q21) [below left = of q2] {}; 
	  \node[fp, minimum size=4.5ex] (q3) [left = 1.2cm of q4] {}; 
	  \node[p, minimum size=4ex] (q4) [below right = of q2] {};
      \node[a, minimum size=4.5ex] (q41) [below of = q21] {};  
 	   \node[fa, minimum size=4ex] (q6) [below of = q3] {};
	   \node[a, minimum size=4ex] (q7) [below of = q4] {};  
	   \node[p, minimum size=4ex] (q8) [below left = of q7] {}; 
	   \node[fp, minimum size=4ex] (q9) [right = 1.15cm of q8] {};
	   \node[fa, minimum size=4ex] (q10) [right = 1cm of q9] {};
	   \node[fp, minimum size=4ex] (q11) [right = 1cm of q10] {};
\begin{scope}[every node/.style={scale=1}]
  \path (q1) edge  node [pos=0.2] {} (q2);
  \path (q2) edge  node [pos=0.2,left] {} (q21);
  \path (q2) edge [black!30,thick] node [pos=0.25,right] {} (q3);
  \path (q2) edge [black!30,thick] node [pos=0.2,right] {} (q4);
  \path (q21) edge node [pos=0.7] {} (q41);
  \path (q4) edge  node  [pos=0.2, left] {} (q5);
  \path (q21) edge  [black!30] node  [pos=0.1, right] {} (q6);
  \path (q41) [bend left] edge node [pos=0.1, left] {} (q21);
  \path (q6) [black!30,bend right] edge node [pos=0.2,right] {} (q3);
  \path (q3) edge [black!30]  node [pos=0.5,left] {} (q6);
  \path (q7) edge node [pos=0.1,left] {} (q8);
  \path (q7) edge [black!30] node [pos=0.5,left] {} (q9);
  \path (q7) edge [black!30] node [pos=0.5,right] {} (q11);
  \path (q11) edge [black!30,bend right] node [pos=0.5,right] {} (q7);
  \path (q9) edge [black!30]  node [pos=0.5,below] {} (q10);
  \path (q10) edge [black!30]  node [pos=0.5,below] {} (q11);
   \path (q8) edge [bend left = 50] node [pos=0.1,left] {} (q7);

 \end{scope}
\end{tikzpicture}

}
\end{center}
\vspace{-0.2cm}
\caption{(A) An example of a game arena with a safety winning condition. (B) The nominal adviser $\alpha^0$ and $\losing$ via marking the forbidden transitions and states in $\losing$ in red. $\varsigma_p^{\alpha^0\star}$ is in green. (C) The non-blocking adviser restricted arena $\T^{\alpha^0}$. (D) $\alpha^\star$ and (E) The non-blocking adviser restricted arena $\T^{\alpha^\star}$.}
\label{fig:example3}
\vspace{-0.7cm}
\end{figure}
The guided system execution proceeds as follows: The system starts in state $s_\curr = s_{p_1}$ with $\alpha_\curr = \alpha^\star$ and $\varsigma_{p,\curr} = \varsigma_p^{\alpha^\star \star}$. Input $u_{p_1}$ is applied, $s_\curr = s_{2}$. Then, $\alpha_\curr(s_\curr) = \alpha^\star(s_2)$ is provided. The adversary chooses either $u_{a_1}, u_{a_2}$, or $u_{a_3}$, but, through the adviser it is recommended not to select $u_{a_3}$ (soft advise) and $u_{a_2}$ (hard advise). If the choice is $u_{a_1}$, the system state is updated to $s_\curr = s_3$, and in the remainder of the execution, the protagonist and the adversary apply $u_{p_2}$ and $u_{a_4}$, respectively, switching between states $s_3$ and $s_6$. If the choice is $u_{a_3}$, a soft advice is disobeyed, the current state becomes $s_5$ and the current adviser and strategy are updated to $\alpha_\curr = \alpha^0$ and $\varsigma_{p,\curr} = \varsigma_p^{\alpha^0 \star}$, which satisfy that $\lambda(\alpha^0) = \inf_{\alpha \in \A_\preceq} \lambda(\alpha)$. Input $u_{p_5}$ is then applied and $s_\curr = s_8$. In the remainder of the execution, the adversary is guided to follow the hard advices $u_{a_7},u_{a_8} \in \alpha_{\curr}(s_8)$, leading the system to switching between $s_8$ and $s_9$. If the choice in $s_2$ is $u_{a_2}$ despite the hard advice, the system reaches an unsafe state.
}

\end{example}


\section{Conclusions and future work}
We have studied the problem of synthesizing least-limiting guidelines for decision making in semi-autonomous systems involving entities that are uncontrollable, but partially willing to collaborate on achieving safety of the overall system. We have proposed a rigorous formulation of such problem and an algorithm to synthesize least-limiting advisers for an adversary in  a 2-player safety game and we have proposed a systematic way to guide the system execution with their use. 
As far as we are concerned, this paper presents one of the first steps towards studying the problem of synthesizing guidelines for uncontrollable entities. Future work naturally includes extensions to more complex winning conditions, different measures of level of violation, and continuous state spaces. We also plan to implement the algorithms and show their potential in a case study.
\label{sec:conc}


%
\bibliographystyle{abbrv}
\bibliography{refer}  
%
%


\end{document}